\newif\ifarxiv
\arxivtrue

\ifarxiv

\documentclass{article}
\usepackage{amsthm}
\usepackage{amssymb}
\usepackage{amsmath}
\usepackage{hyperref}
\usepackage{cleveref}
\usepackage{thm-restate}
\usepackage{caption}
\usepackage{subcaption}
\usepackage[a4paper,total={140mm,222mm}]{geometry}

\newtheorem{theorem}{Theorem}
\newtheorem{lemma}[theorem]{Lemma}
\newtheorem{corollary}[theorem]{Corollary}
\theoremstyle{definition}

\title{Strategic Facility Location in Euclidean Spaces} %

\else

\documentclass[a4paper,UKenglish,cleveref, autoref, thm-restate]{lipics-v2021}

\title{Strategic Facility Location in Euclidean Spaces} %

\author{Kim Thắng {Nguyễn}}{Université Grenoble Alpes, Grenoble INP, CNRS, INRIA, LIG, Grenoble, France}{kim-thang.nguyen@univ-grenoble-alpes.fr}{}{}
\author{Lucas Perotin}{Université Grenoble Alpes, Grenoble INP, CNRS, INRIA, LIG, Grenoble, France}{lucas.perotin@univ-grenoble-alpes.fr}{}{}
\author{Bertrand Simon}{Université Grenoble Alpes, CNRS, Grenoble INP, INRIA, LIG, Grenoble, France}{bertrand.simon@cnrs.fr}{}{}

\authorrunning{K.T. Nguyễn, L. Perotin and B. Simon } %

\Copyright{Kim Thắng Nguyễn, Lucas Perotin and Bertrand Simon} %

\ccsdesc[500]{Theory of computation~Algorithmic mechanism design}

\keywords{Facility Location, Strategy-proofness, Euclidean spaces} %

\category{} %

\relatedversion{} %

\acknowledgements{This work is supported by the ANR project Predictions ANR-23-CE48-0010 and the MIAI Chaire Frugal Artificial Intelligence ANR-23-IACL-0006.}%

\fi

\usepackage[numbers]{natbib}
\usepackage{amsthm}

\usepackage[utf8]{inputenc} %
\usepackage[T5]{fontenc} %

\newtheorem{hypothesis}{Hypothesis}

\usepackage{xspace}
\usepackage{tablefootnote}
\usepackage{pgfplots}
\usetikzlibrary{arrows.meta,calc,decorations.markings}
\pgfplotsset{compat=1.18}

\newcommand{\RR}{\ensuremath{\mathbb R}\xspace}
\newcommand{\RRp}{\ensuremath{\mathbb R^{\phantom{2}}}\xspace}
\newcommand{\OPT}{\ensuremath{\mathit{OPT}}\xspace}
\newcommand{\cost}{\ensuremath{\mathit{cost}}\xspace}

\usepackage{amsmath}
\usepackage{dsfont}
\usepackage{bm}
\usepackage{todonotes}

\newcommand{\vect}[1]{\ensuremath{\bm{#1}}}

\newcommand{\E}{\ensuremath{\mathbb{E}}\xspace}
\newcommand{\M}{\ensuremath{\mathcal{M}}\xspace}
\newcommand{\II}{\ensuremath{\mathcal{I}}\xspace}
\newcommand{\OO}{\ensuremath{\mathcal{O}}\xspace}

\ifarxiv

\date{}

\author{
    Kim Thắng {Nguyễn}, %
	Lucas Perotin, %
    Bertrand Simon %
}
\date{
\small{
Université Grenoble Alpes, Grenoble INP, CNRS, INRIA, LIG, Grenoble, France\\
    kim-thang.nguyen@univ-grenoble-alpes.fr, lucas.perotin@univ-grenoble-alpes.fr, bertrand.simon@cnrs.fr
}}

\fi

\begin{document}

\maketitle

\begin{abstract}
The strategic facility location problem is defined as follows: $n$ agents report
their location in a metric space, and the objective is to design a
\emph{mechanism} deciding the (possibly randomized) location of a facility such
that agents have no incentive to lie about their position. We focus on the
egalitarian cost, which means that the goal of the mechanism is to minimize the
expected maximal facility-agent distance. Meanwhile, mechanisms must be \emph{truthful} (or
\emph{strategyproof}): no agent may decrease their expected distance to the
facility via lying on their location.

Designing truthful mechanisms minimizing the approximation ratio is a
well-studied problem, and the optimal solution is known for the real line. We
focus in this paper on higher dimension Euclidean spaces, for which gaps remain
between the best known lower and upper bounds. We first show that, maybe
counter-intuitively, the problem is easier for two agents on the plane rather
than on the line: the mechanism can exploit the additional dimension to prevent
more efficiently agent lies. Based on this intuition, we devise lower bounds for
$\mathbb R^d$ asymptotically matching the best known approximation factor of $2$
for large $d$. We also provide novel mechanism ideas, improving over the best
known algorithms on the plane, and when the agents belong to $\mathbb R^d$ but
the facility may use an additional dimension.
\end{abstract}

\newpage

\section{Introduction}

Mechanism design is a well-established research area in which payments are used to incentivize participants to truthfully report their preferences, with the goal of designing mechanisms that achieve certain social objectives. Such mechanisms are called \emph{strategyproof} or \emph{truthful}. While payments play a central role in the design of strategyproof mechanisms, in many settings, such as medical domains and public services, monetary transfers are not an appropriate means of incentivization. Designing truthful mechanisms without payments in general settings is impossible under unrestricted preferences, as formalized by the celebrated Gibbard--Satterthwaite impossibility theorem. However, in specific settings, strategyproofness can be achieved by relaxing the optimality requirements of the social objective. Procaccia and Tennenholtz~\cite{procaccia2013approximate} pursued this direction and initiated the study of mechanisms without payments for the facility location problem, which has since become a central benchmark in approximate mechanism design without money.

In the facility location problem, there are $n$ agents located in a metric space, and the goal is to select a location at which to open a facility. The rule used to select the facility location is public knowledge and may be either deterministic or randomized. Each agent has a private location and reports a location to the designer. The facility location is then chosen based on the reported locations. Agents are self-interested, and each aims to minimize their \emph{individual cost}, defined as the (expected) distance from their true location to the facility. Thus, an agent may attempt to manipulate the outcome of the facility location rule by strategically misreporting their position. The designer’s goal is to construct a mechanism under which every agent has an incentive to report truthfully, while also minimizing a social cost objective. Two standard aggregate objectives are the \emph{utilitarian cost}, defined as the sum of all agents’ costs, and the \emph{egalitarian cost}, defined as the maximum cost over all agents.

The facility location problem in mechanism design without money was studied extensively about a decade ago. For general metrics, tight bounds were established for both the utilitarian~\cite{thang2010group} and egalitarian~\cite{alon2010strategyproof} objectives. These bounds are essentially achieved by dictatorship mechanisms and their randomized counterparts. Recently, interest in the problem has been revived, motivated by beyond-worst-case analysis and by the design of algorithms and mechanisms with predictions. Specifically, the lower-bound constructions in~\cite{alon2010strategyproof,thang2010group} are based on graph metrics, whereas practical applications of facility location often arise in Euclidean spaces. This motivates the study of facility location mechanisms in the Euclidean space $\mathbb{R}^{d}$.  
In particular, one aims for the design of new mechanisms beyond the classic dictatorship framework and improved approximation bounds. 

We focus in this paper on the egalitarian objective, thus minimizing the
distance between the output facility and the farthest agent. In this case, it is
known that the best deterministic algorithm returns any agent location and
achieves an approximation ratio of 2 (applying the triangle inequality to bound
the distance between any pair of agents by the sum of their distances to the
optimal facility). We therefore consider randomized algorithms, and aim at
minimizing the expected maximal distance between the facility and any agent.

When restricted to the one-dimensional case, in which all agents and the
facility must lie on the real line, the optimal randomized mechanism for the
egalitarian cost is known~\cite{procaccia2013approximate} and achieves an
approximation ratio of $1.5$, which is tight even for two agents. In two
dimensions, this problem remains open. Furthermore, the best known lower bound
is weaker and equals $\sqrt{5/4}$~\cite{balkanski2024randomized}, valid also for
two agents. This is counterintuitive, as the problem seems harder for the
mechanism. Restricted to two agents, it was conjectured
in~\cite{balkanski2024randomized} that the facility should lie on the line
connecting the agents, which would imply the validity of the $1.5$ lower bound.

In this paper, we refute this conjecture: there exists a
$\sqrt{2}$-approximation algorithm for two agents on the plane, which is
\emph{optimal}. This surprising result suggests that allowing the facility to be opened in a larger space 
than the agents' feasible space can be beneficial, even if it
is obvious that the optimal facility location lies on the original space.
This is the consequence of the additional possible locations allowing to penalize
more efficiently algorithm lies. Most randomized mechanisms use some center
(e.g., centroid) of the agents reported positions to determine the facility
location. An agent intuitively has an interest to lie in the direction opposite
to this center to bias the mechanism towards its location. Allowing the facility
to penalize the spread of the agent positions using the additional possible locations
therefore allows for better mechanisms.

We therefore consider the setting in which the space \II in which the agents may
reside and report their position is not necessarily equal to the space \OO in
which the mechanism may open a facility. This intuition led to the design of a
lower bound of $2-\varepsilon$ for the randomized facility location problem in
$\mathbb R^d$ for large $d$ and $n$, thus asymptotically matching the upper
bound. We further improve lower and upper bounds on the plane. We also
generalize the idea of using an additional dimension to design a better
deterministic mechanism when $\II=\mathbb R^d$ and $\OO=\mathbb R^{d+1}$.
Finally, we show that in the specific setting $\II=\mathbb R^2$ and $\OO=\mathbb
R^3$, a different generalization leads to a $\sqrt{3}$-approximation, which can
be improved when restricted to three agents.

\subsection{Summary of contributions}

As discussed above, our first result consists in the design and optimality proof
of a $\sqrt{2}$-approximation mechanism in the plane. The mechanism places the
facility at the two remaining vertices of the square whose other two opposite
vertices are the agents' reported positions. The result stays valid when the
agents are restricted to the line, for any $n\geq 2$, or when there are two
agents in the plane; see Section~\ref{sec:two-agents}.
Note that throughout this paper, we focus on truthfulness \emph{in expectation}
regarding randomized algorithms, as detailed in the next section. The proof of
the lower bound uses new ideas based on a potential function over the plane
that measures the \emph{truthfulness} and the
\emph{efficiency} of a facility located at each point. The expected cost of a
randomized mechanism is therefore lower bounded by the potential minimum.

\begin{theorem}
\label{th:optimal-sqrt-two}
The optimal approximation ratio achievable by a truthful mechanism is
$\sqrt{2}$ in either of the following settings: $\II=\mathbb R$ and
$\OO=\mathbb R^2$ for any $n\geq 2$, or $\II=\OO=\mathbb R^2$ for $n=2$.
In the first setting, the optimal mechanism is deterministic. Lower bounds
assume mechanisms invariant under translations, rotations, and scalings.
\end{theorem}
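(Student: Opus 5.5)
The proof splits into an upper bound (explicit mechanisms) and a matching lower bound. For the lower bound it suffices to handle the first setting with $n=2$ agents on the line: any mechanism for $\II=\OO=\mathbb R^2$ with $n=2$, restricted to collinear inputs, is a mechanism for $\II=\mathbb R$, $\OO=\mathbb R^2$. Extra agents can be placed at the same location as an existing one.

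\emph{Upper bound, first setting.} Let $L$ and $R$ be the leftmost and rightmost reports. Output the deterministic point $F=\left(\frac{L+R}{2},\frac{R-L}{2}\right)$, the apex of the right isosceles triangle over $[L,R]$. Its distance to both $L$ and $R$ is $\frac{R-L}{\sqrt2}=\sqrt2\cdot\OPT$, and interior agents are closer, so the ratio is $\sqrt2$.

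For truthfulness, write $u=F_1-F_2$ and $v=F_1+F_2$, so that $L=u$ and $R=v$. Then for any real $x$,
\[
\|x-F\|^2=\frac{(L-x)^2+(R-x)^2}{2}.
\]
Fix an agent with true position $x$. Under a misreport, the new extremes $L''\le R''$ still cover all other agents. If $x$ is the leftmost agent, then $R''\ge R$ (the rightmost other agent is unchanged), so $\|x-F''\|^2\ge (R-x)^2/2$, which is the truthful cost. If $x$ is interior, $L''\le L$ and $R''\ge R$, so both terms only grow. Ties and the symmetric rightmost case are handled the same way.

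\emph{Upper bound, second setting ($n=2$ in the plane).} Output uniformly one of the two remaining vertices of the square with diagonal $ab$, namely $c_\pm=b+\rho_{\pm45^\circ}(a-b)/\sqrt2$. The cost is $|a-b|/\sqrt2=\sqrt2\cdot\OPT$. For truthfulness of agent $a$ (with $b$ fixed), set $u=a-b$ and $w=\rho_{45^\circ}(a'-b)/\sqrt2$. Any report $a'$ yields $c_+=b+w$ and $c_-=b+\rho_{-90^\circ}w$, and $w$ ranges over all vectors. The expected cost is then
\[
\tfrac12\bigl(|u-w|+|\rho_{90^\circ}u-w|\bigr)\ge \tfrac12\,|u-\rho_{90^\circ}u|=\tfrac{\sqrt2}{2}|u|,
\]
by the triangle inequality. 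Equality holds for the truthful report, where $w$ lies on the segment between $u$ and $\rho_{90^\circ}u$.

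\emph{Lower bound.} Take agents at $-1$ and $1$ on the line, with $\OPT=1$. By translation, rotation and scaling invariance, the mechanism's distribution $P$ on this instance determines the distribution on every two-agent instance. I will assume the distribution is symmetric under reflection, by symmetrizing the mechanism, which preserves truthfulness and cost.

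Truthfulness of the agent at $1$ against the report $1+t$ (and, by scaling, against reports $1+t$ with $t<0$) gives a family of inequalities. Each compares $\E_{F\sim P}|F-1|$ with $\E_{F\sim P}|\sigma_t(F)-1|$, where $\sigma_t$ is the similarity mapping $[-1,1]$ to $[-1,1+t]$. Taking $t\to0$ gives a first-order condition, namely a linear constraint on $P$ of the form $\E_P[g(F)]\ge0$ for an explicit function $g$ on the plane. The cost is $\E_P[h(F)]$ with $h(F)=\max(|F-1|,|F+1|)$.

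The plan is to find $\lambda\ge0$ such that the potential $\phi=h-\lambda g$ satisfies $\phi\ge\sqrt2$ pointwise on $\mathbb R^2$. This yields $\E_P h\ge\E_P\phi\ge\sqrt2$. The natural choice of $\lambda$ makes $\phi$ tight at $(0,\pm1)$, matching the optimal mechanism.

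The main obstacle is choosing the perturbation (one agent moving along the line, possibly combined with a scaling) so that $g$ is rich enough for this pointwise inequality to hold everywhere in the plane. In particular, points on the line, which give the $1.5$ barrier in one dimension, and far-away points must both be dominated. I would try first-order perturbations and, if needed, mix them with finite deviations.
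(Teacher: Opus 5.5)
\textbf{Upper bounds.} These are correct and coincide with the paper's. For the lifted interval mechanism you use the same identity $\|x-F\|^2=\frac{(x-L)^2+(x-R)^2}{2}$. For the square lottery, your rotation-plus-triangle-inequality computation is the paper's argument in different coordinates.

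\textbf{Lower bound: the gap.} The gap is the lower bound, which is the substantive half of the theorem. You leave it as a plan, and the plan as literally stated cannot work. Take $g$ to be the first-order constraint of the agent at $1$ alone. Then $g(1)=1$ (one-sided derivative) and $h(1)=2$, while $g(0)=-\tfrac12$ and $h(0)=1$. So $\phi\ge\sqrt2$ at these two points requires both $\lambda\le 2-\sqrt2$ and $\lambda\ge 2\sqrt2-2$, which is impossible. A facility at an agent's own location is good for that agent's incentives but bad for the other agent. You must therefore also use the outward-deviation constraint of the agent at $-1$, which by invariance reads $\E_P[g(-F)]\ge0$. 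This follows directly from that agent's truthfulness, or from your symmetrization. You then work with $\phi(F)=h(F)-\frac{\lambda}{2}\bigl(g(F)+g(-F)\bigr)$.

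\textbf{What remains to be proved.} Write $p=|F-1|$ and $q=|F+1|$; then $g(F)=\frac{p^2+q^2-4}{4p}$. Having $\phi=\sqrt2$ at $(0,\pm1)$ holds for every $\lambda$; requiring $(0,\pm1)$ to be local minima of $\phi$ is what forces $\lambda=1$. What remains to be proved is
\[
\max(p,q)-\frac{(p+q)(p^2+q^2-4)}{8pq}\ \ge\ \sqrt2 \qquad (F\ne\pm1),
\]
together with $\phi(\pm1)=\tfrac32$. This is exactly the paper's bound $V=D+T/4\ge 1/\sqrt2$, rescaled: the paper places the agents at $0$ and $1$, and $\phi=2V$.

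The paper proves it as follows. Assume $p\ge q$. For fixed $p$, the subtracted term has no interior maximum in $q$ on the triangle range $[|2-p|,p]$, because the numerator of its $q$-derivative is increasing. Its value at $q=|2-p|$ does not exceed its value $\frac p2-\frac1p$ at $q=p$. Hence $\phi\ge\frac p2+\frac1p\ge\sqrt2$ by AM--GM. So first-order outward deviations do suffice and no finite deviations are needed. But without this verification your proposal establishes no lower bound at all.

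\textbf{The case $n>2$.} For $n>2$ agents on the line, co-locating the extra agents needs one more sentence. Each group's first-order constraint follows by moving its agents outward one at a time, since no move can decrease the expected distance to their common true position. When the two groups have different sizes, the symmetrization is unavailable, but it is also unnecessary: both constraints come directly from truthfulness.
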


Based on the intuitions gained by this surprising result, we devise a recursive
lower bound construction for $\mathbb R^d$ in which a fraction of $n$ agents
are iteratively moved onto the vertices of $q$-simplices in new dimensions. This
results in a lower bound for any randomized mechanism of $2-\varepsilon$ for large $d$ and
$n$, asymptotically matching the best known upper bound of $2-1/n$; see Section~\ref{sec:lbrd}.
This construction can be tuned for $\mathbb R^2$ to improve the lower bound in this case.

\begin{theorem}
Any truthful randomized mechanism on $\mathbb R^d$ and $n$ agents has an approximation ratio at least $2 - O\left( \frac{\log d}{d} + \frac{1}{n^{1/d}} \right)$, and at least $1.6054 - O\left(\frac1n\right)$ for $d=2$.
\end{theorem}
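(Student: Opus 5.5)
We build one explicit family of instances and use truthfulness in expectation along a chain of deviations, in the spirit of the two-agent lower bound of Theorem~\ref{th:optimal-sqrt-two}. The chain starts from a highly spread configuration where the optimal radius is large, and ends in a nearly degenerate configuration where the optimal radius is small. Truthfulness forces the facility distribution at the two ends to be comparable, which yields the ratio. We may assume the mechanism is invariant under translations, rotations and scalings. The gap between such mechanisms and general ones is closed by the usual averaging argument over symmetrizations: averaging the outcome over the isometry group preserves truthfulness and does not increase the worst-case ratio.

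\textbf{Construction.} Start with $n$ agents split into groups. In the base configuration, a fraction of the agents sits at one point $x_0$. At stage $k=1,\dots,d$, a fraction of the remaining agents is moved onto the $q+1$ vertices of a regular $q$-simplex living in a new coordinate direction, centered at the current cluster.

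Each individual move, say agent $i$ relocating from $a$ to $b$ with the others fixed, yields two truthfulness inequalities:
\[
\E\|F(\vect{x})-a\| \le \E\|F(\vect{x}')-a\| \quad\text{and}\quad \E\|F(\vect{x}')-b\| \le \E\|F(\vect{x})-b\|.
\]
Here $\vect{x}$ and $\vect{x}'$ are the profiles before and after the move. Summing the inequalities along a whole stage and using the symmetry of the simplex, we get that the expected distance to the centroid at the end of the stage is at least the previous one plus an additive gain. That gain is roughly the simplex radius times a factor $1-O(1/q)$, because a point is far from at least one vertex of a regular simplex on average. Meanwhile the optimal cost, the minimum enclosing ball radius, grows only like $\sqrt{r_{\text{old}}^2+r_{\text{new}}^2}$ thanks to orthogonality of the new dimensions.

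\textbf{Potential argument.} As in Theorem~\ref{th:optimal-sqrt-two}, I would encode the accumulated inequalities as a potential over candidate facility points $p$. The potential combines $\max_i\|p-x_i\|$, the cost term, with a weighted sum of $\|p-a\|-\|p-b\|$ terms, the truthfulness terms. Its minimum over $p$ lower-bounds $\E[\text{cost}]/\OPT$ on the final instance.

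Choosing the radii geometrically increasing makes the accumulated expected distance approach the sum of the radii, while \OPT is close to their $\ell_2$ combination. Tuning $q\approx d/\log d$ stages per block gives $2-O(\log d/d)$. The discretization error from splitting $n$ agents into $d$ nested fractions costs $O(n^{-1/d})$. For $d=2$, I would instead use a single intermediate step with optimized weights and positions, minimizing the potential numerically or in closed form; I expect this to give $1.6054-O(1/n)$, where the $O(1/n)$ comes from agents moved one at a time.

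\textbf{Main obstacle.} The hardest step is proving the potential lower bound, i.e.\ that for every point $p$ in $\mathbb R^d$ the weighted combination is at least $(2-\varepsilon)\OPT$. This requires controlling the distance from $p$ to a simplex in many orthogonal directions at once. I would first try a case split on the projection of $p$ onto each new subspace. If the projection is close to the centroid, the truthfulness terms of that stage are small but the cost term grows. If the projection is far, the cost term is dominated by the farthest simplex vertex. The two regimes can then be balanced dimension by dimension via an inductive inequality.
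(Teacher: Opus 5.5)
There is a genuine gap; in fact there are two. The overall spirit (orthogonal simplices, truthfulness along a chain of moves) matches the paper, but the per-stage accounting is wrong and the potential argument cannot be set up.

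\textbf{The per-stage gain and the source of the factor $2$.} Your claim that the expected distance grows additively by about the simplex radius $r$ is false. The correct statement, which is the key lemma of the paper's proof, is Pythagorean. If $\E\|F-c\|=R$, where $c$ is the center of a regular $q$-simplex with radius $r$ and vertices $v_0,\dots,v_q$, then $\max_i\E\|F-v_i\|\ge\sqrt{R^2+r^2}-\frac{4r}{3\sqrt3\,q}$. It follows from a second-order lower bound on $\sqrt{\|p\|^2+r^2-2p\cdot v_i}$, the tight-frame identity $\frac1{q+1}\sum_i v_iv_i^\top=\frac{r^2}{q}I_q$, and Jensen for $s\mapsto\sqrt{s^2+r^2}$. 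A facility at distance $R$ from $c$, orthogonal to the simplex, has distance exactly $\sqrt{R^2+r^2}<R+r$ to every vertex, so no additive gain is possible. Your accounting is also internally inconsistent. Suppose the expected distance to an agent location approached $\sum_\ell r_\ell$ while $\OPT$ stayed near $\sqrt{\sum_\ell r_\ell^2}$. Then radii growing by a factor $\sqrt2$ would give a ratio near $1+\sqrt2>2$, and nearly equal radii an unbounded one. Both contradict the $(2-1/n)$-approximate Centroid mechanism.

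In the paper the factor $2$ comes from elsewhere. Each stage uses side length $2^{\ell/2}$ and fresh orthogonal coordinates, and is centered at the vertex $A_\ell$ that \emph{maximizes} $\E_{F\sim\M(I_\ell)}\|F-A_\ell\|$, which is a mechanism-dependent choice. These stages only raise $\E\|F-A_k\|$ to about $\sqrt{2^k-1}$. The agents at $A_k$ are then spread uniformly on a sphere of radius $2^{k/2}$ around $A_k$, whose ball contains all other agents by orthogonality, so $\OPT\le 2^{k/2}$. For every realization of $F$ some agent is nearly antipodal to $F$, so the cost is at least $\|F-A_k\|+2^{k/2}$ minus a discretization error. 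This sphere step, which is also where the $n^{-1/d}$ term really comes from, is missing from your plan.

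\textbf{The potential argument and the $d=2$ case.} A single pointwise potential over candidate points $p$ cannot encode your constraints, and this is exactly the step you flag as the main obstacle. Each truthfulness inequality compares expectations under two different distributions, $\M(\vect{x})$ and $\M(\vect{x}')$. In the two-agent proof this worked only because scale invariance gives $\M(1+\varepsilon,0)\stackrel{d}{=}(1+\varepsilon)\M(1,0)$, so both sides are expectations of one random variable. Along your chain the profiles are not similar to each other, so invariance would not help. It is also not available for free: translations and scalings form non-compact groups, and scalings are not isometries, so no averaging argument makes invariance without loss of generality.

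The paper needs no invariance. It tracks one scalar per stage, $\E\|F-A_\ell\|$, which cannot decrease when agents truly located at $A_\ell$ move away, by their own truthfulness. It then applies the pointwise lemma plus Jensen within one profile at a time.

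For $d=2$, ``optimize the potential numerically'' is not a proof. The paper's argument runs as follows.
\begin{enumerate}
\item Start with two equal clusters at $O$ and $A$, distance $1$ apart. By the triangle inequality, one of them, say $O$, has $\E\|F-O\|\ge 1/2$.
\item Move the agents at $O$ onto the other two vertices $B,C$ of the equilateral triangle $ABC$ of circumradius $1$ centered at $O$.
\item By Nikolov's theorem, $\frac13\sum_{v}\|F-v\|\ge\phi(\|F-O\|)$ with $\phi(\rho)=\frac{|1-\rho|+2\sqrt{\rho^2+\rho+1}}{3}$, which is convex and nondecreasing. Jensen then gives a vertex $v$ with $\E\|F-v\|\ge\phi(1/2)=\frac{1+2\sqrt7}{6}$.
\item Spread the agents at $v$ on the circle of radius $\sqrt3$ around $v$. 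This yields ratio $\bigl(\frac{1+2\sqrt7}{6}+\sqrt3\bigr)/\sqrt3-O(1/n)\approx 1.6054-O(1/n)$.
\end{enumerate}
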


The classical randomized mechanism with approximation ratio $2-1/n$ is known as the
\emph{Centroid}~\cite{tang2020characterization}: with probability $1/2$, output
the centroid (isobarycenter) of the agents' reported positions, and otherwise
output a randomly drawn agent position. In~\cite{balkanski2024randomized}, the
authors consider a variant of the problem where the mechanism has further access
to unreliable predictions. They show that using a different center, which was
never used to the best of our knowledge for the egalitarian objective, allows
for a better mechanism on the plane. We show that this center can actually be leveraged to improve
over the classical algorithm in the plane, thus also improving over the $2-1/n$ approximation factor stated as
an open problem in~\cite{balkanski2024randomized}, see Section~\ref{sec:R2R2UB}. A direct generalisation also leads to an improvement on $\mathbb R^3$.

\begin{theorem}
\label{th:R2UB}
There exists a randomized truthful mechanism on $\mathbb R^2$ with an approximation factor at most $2-\frac{3-\sqrt 2}{n+2}$, which can be extended to a $2-\frac{3-\sqrt{3}}{n+2}$-approximation on $\mathbb R ^3$.
\end{theorem}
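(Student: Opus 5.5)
The plan is to use a mixture mechanism of the Centroid type, but to replace the centroid by a center whose egalitarian cost is bounded by a dimension-dependent factor better than $2$. Working backwards from the target ratio, $2-\frac{3-\sqrt2}{n+2}=\frac{2n+1+\sqrt2}{n+2}$. This suggests the following accounting, which I have not yet verified: each of the $n$ reported agent positions is output with probability $\frac{1}{n+2}$ and contributes cost at most $2\,\OPT$, and the remaining mass $\frac{2}{n+2}$ goes to a center $c(\vect x)$ with expected cost at most $\frac{1+\sqrt2}{2}\OPT$ (equivalently, two $\frac{1}{n+2}$ pieces of cost $\OPT$ and $\sqrt2\,\OPT$). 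In $\mathbb R^3$ the same accounting with $\sqrt3$ in place of $\sqrt2$ gives the second bound.

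\textbf{Candidate center.} I would take the center from~\cite{balkanski2024randomized}, which I expect to be (a randomized rounding around) the center of the axis-parallel minimum bounding box of the reports. Its worst-case cost is at most half the box diagonal. Since every side of the box is at most $2\,\OPT$, this is at most $\sqrt d\,\OPT$, which explains $\sqrt2$ and $\sqrt3$.

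\textbf{Efficiency.} I would first prove the efficiency bound for the center on its own. Then I would combine it with the trivial bound $2\,\OPT$ for each dictatorship component. If necessary, I would sharpen this using the fact that not all agents can be simultaneously at distance $2\,\OPT$ from the farthest agent, and that this interacts with how the center is placed. This is the step that should produce the exact constant $3-\sqrt2$.

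\textbf{Truthfulness.} This is the main obstacle. Fix agent $i$ with true position $x_i$ and a report $x_i'$ at distance $\delta$ from $x_i$.
\begin{itemize}
\item The dictatorship part raises $i$'s expected cost by exactly $\frac{\delta}{n+2}$.
\item The center part can lower it by at most $\frac{2}{n+2}$ times the decrease of $\|x_i-c\|$.
\end{itemize}
So it suffices to show that the center is ``$\frac12$-Lipschitz toward the liar'': $\|x_i-c(\vect x)\|-\|x_i-c(\vect x')\|\le \delta/2$. For the bounding-box center, a lie moves each coordinate of the center by at most half the change of the corresponding extreme coordinate. The change can only help $i$ in coordinates where $i$ is, or becomes, an extreme point. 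Here I would do a case analysis on which sides of the box agent $i$ defines. The difficulty is that movements in two coordinates combine in Euclidean norm, so the per-coordinate bound must be shown to aggregate to at most $\delta/2$ rather than $\delta/\sqrt2$. If the direct bounding-box center fails this, I would instead randomize the orientation of the box (a uniformly random rotation), which restores a Lipschitz bound in expectation. The $\mathbb R^3$ extension would repeat the same coordinate-wise argument with three axes.
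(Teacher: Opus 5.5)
You have the right mechanism (bounding-box center with probability $\frac{2}{n+2}$, each report with probability $\frac{1}{n+2}$). Your truthfulness reduction, that the center is $\frac12$-Lipschitz toward the liar, is exactly the paper's. The genuine gap is the efficiency step.

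Bounding the center's cost by half the box diagonal, $\sqrt d\,\OPT$, only gives $\frac{2n+2\sqrt2}{n+2}=2-\frac{4-2\sqrt2}{n+2}$ on the plane and $2-\frac{4-2\sqrt3}{n+2}$ in $\mathbb R^3$. These are strictly weaker than the claim, since $4-2\sqrt2\approx1.17<3-\sqrt2\approx1.59$ and $4-2\sqrt3\approx0.54<3-\sqrt3\approx1.27$. Your own accounting needs the center to cost at most $\frac{1+\sqrt d}{2}\OPT$, not $\sqrt d\,\OPT$, so the remark ``which explains $\sqrt2$ and $\sqrt3$'' contradicts it. The sharpening you propose also targets the wrong term. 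The dictatorship bound $\frac{n}{n+2}\cdot 2\OPT$ is essentially tight on its own (spread many agents densely on the enclosing circle). No interaction argument is needed: all the slack is in the center term.

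The missing idea is to bound the distance from the box center $C$ to a \emph{reported point}, which lies inside both the box and the enclosing disk, rather than to a box corner. Normalize the smallest enclosing disk to the unit disk at the origin, let $(a,b)$ be the report farthest from $C$, and flip axes so that $a,b\ge0$.
\begin{itemize}
\item Since $-1\le L_1\le a\le U_1\le 1$, you get $|a-c_1|\le\frac{1+a}{2}$, and likewise $|b-c_2|\le\frac{1+b}{2}$.
\item With $a^2+b^2\le1$, this gives $(1+a)^2+(1+b)^2\le 3+2\sqrt2=(1+\sqrt2)^2$, so $\|C-(a,b)\|\le\frac{1+\sqrt2}{2}$. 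This is attained by the reports $(-1,0),(0,-1),(\frac{1}{\sqrt2},\frac{1}{\sqrt2})$.
\item The same computation with three coordinates gives $\frac{1+\sqrt3}{2}$.
\end{itemize}
Then $\frac{2}{n+2}\cdot\frac{1+\sqrt d}{2}+\frac{n}{n+2}\cdot 2$ is exactly the claimed bound.

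Finally, the truthfulness ``obstacle'' you flag does not exist. Let $m_k\le M_k$ be the extreme $k$-th coordinates of the other reports. Then $c_k=\frac12\bigl(\min(x_{i,k},m_k)+\max(x_{i,k},M_k)\bigr)$ is $\frac12$-Lipschitz in $x_{i,k}$ alone. Hence $\|c(\vect x)-c(\vect x')\|^2\le\frac14\sum_k(x_{i,k}-x'_{i,k})^2=\delta^2/4$ directly, with no case analysis on extremes and no random rotation.
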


We then focus on the low-dimensional setting $\II=\mathbb R^2$ and $\OO=\mathbb R
^3$. This mechanism generalizes the $\sqrt{2}$-approximation mechanism by
combining an additional dimension with the center used in Theorem~\ref{th:R2UB},
and yields a better competitive ratio. Unfortunately, this mechanism is not
truthful in the more general setting $\II=\mathbb R^3$, and the idea leads to
worse approximation ratios for larger dimensions, see Section~\ref{sec:r2-r3}.

\begin{theorem}
There exists a deterministic truthful mechanism for $\II=\mathbb R^2$ and $\OO=\mathbb R^{3}$ with an approximation factor equal to $\sqrt 3$.
\end{theorem}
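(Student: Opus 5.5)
The plan is to generalize the square mechanism of Theorem~\ref{th:optimal-sqrt-two}, where the facility sits at the midpoint of the two agents lifted orthogonally by half their distance. Given reports $y_1,\dots,y_n\in\mathbb R^2$, let $(c,r)$ be the center and radius of the minimum enclosing disk of the reports. I take this to be the center used in Theorem~\ref{th:R2UB}; that identification is my guess, and only the first property in the next paragraph is really used. The mechanism is deterministic and outputs $f=(c,h)\in\mathbb R^3$ with $h=\lambda r$, where $\lambda$ is to be fixed; I expect $\lambda=\sqrt2$.

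\emph{Approximation.} The optimal facility in $\mathbb R^3$ is still $(c,0)$, with cost $r$. Each agent satisfies $|x_i-c|\le r$, so its distance to $f$ is at most $\sqrt{r^2+\lambda^2r^2}$. A boundary agent attains this bound. Hence the ratio is exactly $\sqrt{1+\lambda^2}$, which equals $\sqrt3$ for $\lambda=\sqrt2$. This step is routine.

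\emph{Truthfulness.} Fix agent $i$ with true position $x$, and suppose its misreport changes the enclosing disk to $(c',r')$. The agent's squared cost is $|x-c|^2+\lambda^2r^2$, so truthfulness is equivalent to
\[
|x-c'|^2-|x-c|^2 \;\ge\; \lambda^2\,(r^2-r'^2).
\]
The main tool is the standard minimum-enclosing-ball property: $c$ lies in the convex hull of the points at distance $r$ from it. It follows that for every point $p$, $\max_j |p-y_j|^2\ge r^2+|p-c|^2$. Apply this with $p=c'$ and $d=|c-c'|$, and split into two cases.

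\emph{Case (a): the maximum is attained at a report other than $x$.} That report lies in the new disk, so $r'^2\ge r^2+d^2$. The right-hand side of the target inequality is then at most $-\lambda^2 d^2$. The left-hand side is at least $-2\langle x-c,\,c'-c\rangle+d^2$. I need to bound the inner product by the geometry of the boundary points of the new disk, not by the crude bound $rd$: the crude bound fails for small $d$.

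\emph{Case (b): the maximum is attained at $x$ itself.} Then $|x-c'|^2\ge r^2+d^2\ge |x-c|^2+d^2$, so the left-hand side is at least $d^2\ge 0$. If $r'\ge r$ the right-hand side is nonpositive and we are done. If $r'<r$, the agent was critical and moved inward. I plan to control the shrinkage $r^2-r'^2$ by $|x-c'|^2-|x-c|^2$, using that the remaining critical points, together with $x$, force the old disk.

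The main obstacle is exactly this quantitative comparison between the drift of the center and the change of radius. It is where the constant $\lambda^2=2$ must come out, presumably from the $3$-point (triangle) configurations of the minimum enclosing disk in the plane. I would first check that inequality on the worst configurations: an equilateral triangle, and two diametral points with the liar sliding along the circle. I would then prove it by reducing to the case where the liar's report, together with at most two other critical points, determines both disks. If the minimum-enclosing-disk center turns out not to be truthful with $\lambda=\sqrt2$, I would fall back to the center of Theorem~\ref{th:R2UB} and redo the same two-case analysis with its defining property.
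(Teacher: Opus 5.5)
Your mechanism is not truthful, so the proposal breaks at exactly the step you flagged. The Case (b) inequality is false, not merely hard to prove. Take $n-1$ agents at $0$ and one agent truly at $x=1$, on a line in the plane. The minimum enclosing disk of the reports $\{0,\dots,0,y\}$ has center $y/2$ and radius $|y|/2$. So if the agent reports $y\in(0,1]$, its squared cost is
\[
g(y)=\Bigl(1-\frac y2\Bigr)^2+\lambda^2\frac{y^2}{4},\qquad g'(1)=\frac{\lambda^2-1}{2}.
\]
With $\lambda=\sqrt2$, reporting truthfully gives $g(1)=3/4$, while reporting $y=2/3$ gives $g(2/3)=2/3<3/4$. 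In your notation, for the inward report $y=1-t$ you get $|x-c'|^2-|x-c|^2=t/2+t^2/4$ but $\lambda^2(r^2-r'^2)=t-t^2/2$. So your target inequality fails for every $t\in(0,2/3)$.

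Tuning $\lambda$ does not help. This two-point configuration forces $\lambda=1$. For an equilateral triangle of circumradius $1$, a vertex agent reporting $(1+t,0)$ moves the circumcenter to $(s,0)$ with $s=(2t+t^2)/(3+2t)$, and its squared cost changes by $\frac{2t}{3}(\lambda^2-2)+O(t^2)$; this forces $\lambda^2=2$. Hence no lift proportional to the minimum-enclosing-disk radius is truthful. Also, the center used in Theorem~\ref{th:R2UB} is the center of the axis-aligned bounding box, not the minimum-enclosing-disk center.

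The missing idea is to choose the lift so that the squared cost separates over coordinates. Each resulting term should depend only on the agent's distance to the extreme reports. The paper takes the bounding-box center $c=(c_1,c_2)$ with half-widths $w_1,w_2$ and lifts it by $h=\sqrt{w_1^2+w_2^2}$. Then
\[
\|(x_i,0)-F\|^2=\sum_{k=1}^{2}\bigl((x_{i,k}-c_k)^2+w_k^2\bigr)=\sum_{k=1}^{2}\frac{(x_{i,k}-L_k)^2+(x_{i,k}-U_k)^2}{2}.
\]
With the other reports fixed, the truthful report minimizes both $|x_{i,k}-L_k|$ and $|x_{i,k}-U_k|$ in each coordinate, exactly as for the lifted interval mechanism of Theorem~\ref{thm:lifted-interval}.

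Your fallback would not reach this. It switches the center but keeps the minimum-enclosing-disk two-case analysis and a lift of the form $\lambda r$; both the height and the truthfulness argument must change. The ratio is also no longer routine, because the box center is not the optimal center. The paper normalizes the smallest enclosing disk to the unit disk and assumes $c_1,c_2\ge 0$. It uses $w_k\le 1-c_k$ and $x_{i,k}\ge 2c_k-1$ to get $x_i\cdot c\ge 2\|c\|^2-(c_1+c_2)$. This gives $\|x_i-c\|^2+h^2\le 3-2\|c\|^2\le 3$, with equality on the four-point cross $\{(\pm1,0),(0,\pm1)\}$.
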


Using a different generalization of the $\sqrt{2}$ algorithm exploiting an additional
dimension for the facility, we show that this approach can be used for any
dimension, by placing the facility \emph{above} the centroid, lifted by a
distance equal to the standard deviation of the agent positions, see Section~\ref{sec:RdRdp}.

\begin{theorem}
There exists a deterministic truthful mechanism for $\II=\mathbb R^d$ and $\OO=\mathbb R^{d+1}$ with an approximation factor equal to $2\sqrt{\frac{n-1}{n}}$.
\end{theorem}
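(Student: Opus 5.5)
The mechanism is fixed by the description: given reports $x_1,\dots,x_n\in\mathbb R^d$, let $c=\frac1n\sum_i x_i$ be the centroid and $\sigma^2=\frac1n\sum_i\|x_i-c\|^2$ the variance. We output $f=(c,\sigma)\in\mathbb R^{d+1}$, that is, the centroid lifted by $\sigma$ along the new axis. The plan is to prove truthfulness and the approximation ratio separately, both from the identity
\[
\|y-f\|^2=\|y-c\|^2+\sigma^2=\frac1n\sum_i\|y-x_i\|^2
\]
for any $y\in\mathbb R^d$. This is the parallel-axis (Huygens) formula, and it does all the work.

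\textbf{Truthfulness.} Fix agent $i$ at true position $y$, with the other reports $x_{-i}$ fixed, and suppose $i$ reports $z$. Applying the identity at the true position gives
\[
\mathit{cost}_i(z)^2=\frac1n\Big(\|y-z\|^2+\sum_{j\neq i}\|y-x_j\|^2\Big).
\]
This is minimized exactly at $z=y$, so no agent gains by lying. Since the mechanism is deterministic, there is no expectation to handle.

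\textbf{Approximation.} For the upper bound, take the optimal center $o$ with radius $r$, so $\|x_i-o\|\le r$ for all $i$. The cost of agent $k$ is $\sqrt{\frac1n\sum_{i}\|x_k-x_i\|^2}$. The term $i=k$ vanishes, and each of the other $n-1$ terms is at most $(2r)^2$. Hence the cost is at most $2r\sqrt{(n-1)/n}$, which gives the ratio $2\sqrt{(n-1)/n}$.

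For tightness, place agent $k$ at $o+r u$ for a unit vector $u$ and the other $n-1$ agents at $o-ru$. The optimum is at most $r$. Agent $k$'s cost is then exactly $\sqrt{(n-1)(2r)^2/n}=2r\sqrt{(n-1)/n}$, so the factor is attained and the claim of equality holds.

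The only delicate step is recognizing the identity above; after that, both parts are one line each. The one thing to check is that the $i=k$ term drops out, since that is exactly what produces $n-1$ rather than $n$ in the bound.
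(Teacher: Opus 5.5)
Your proposal is correct and follows the paper's proof essentially step for step. It uses the same SD-lift mechanism and the same parallel-axis identity $\|y-f\|^2=\frac1n\sum_i\|y-x_i\|^2$ for both truthfulness and the $2\OPT$ pairwise bound, with the $i=k$ term vanishing. Your tightness instance is also the paper's: one agent at one point and $n-1$ agents at a point at distance $2r$.
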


\subsection{Concurrent works}

During the writing of this manuscript, three concurrent works have been
published on Arxiv~\cite{gomes2026improved,hastings2026maximum,barak2026ex}. The
earlier one~\cite{gomes2026improved} designed the same $\sqrt{2}$-approximation
mechanism for two agents in the plane, with a different proof idea for its
optimality, a weaker lower bound for $\mathbb R^d$, and a
$\sqrt{d+1}$-approximation mechanism for $\II=\mathbb R^d$ and $\OO=\mathbb
R^{d+1}$ matching the mechanism we propose for $d=2$, but unusable for larger
$d$, contrarily to our $2\sqrt{\frac{n-1}{n}}$ mechanism. We also reused the
notation for $\II$ and $\OO$ they introduce to improve the consistency over
public manuscripts. The second one~\cite{hastings2026maximum} shows a lower
bound construction similar to ours for $\mathbb R^d$, though our construction
for $\mathbb R^2$ implies a stronger lower bound. The latter one~\cite{barak2026ex} shows another
lower bound construction for $\mathbb R^d$, as well as the same lower bound as
ours for $\mathbb R ^2$.

\subsection{Related Work}

There has been some early work on characterizing strategyproof facility location mechanisms without
payments on lines~\cite{moulin1980strategy} and on circles~\cite{SchummerVohra:01}. 
However, \citet{procaccia2013approximate} initiated in the study of 
approximating the optimum social cost under the constraint of (group) strategyproofness. 
The authors considered facility location problems on line-metric spaces 
and in particular gave tight approximation ratio bounds of 2 and of 3/2 for deterministic 
and randomized mechanisms respectively under egalitarian objective. 
In general metric spaces, tight approximation ratios of 2 
have been estalished in \cite{alon2010strategyproof} and \cite{alon2009strategyproof,thang2010group} 
for egalitarian and utilitarian objectives. 
Subsequently, many studies have focuses on this problem under various scenarios 
and generalization on the number of facilities 
(few facilities \cite{LuSunWang:10,chan2026randomized,ma2026breaking,jia2026product,aziz2026anchoring} 
and many facilities \cite{escoffier2011strategy}), on strong notion of group-strategyproofness 
(for example, \cite{procaccia2013approximate,thang2010group}), etc. 
We refer to the survey~\cite{HauAris21:Mechanism-Design} for further discussions.

The interest in this problem has been revived recently using the recent
\emph{learning-augmented} framework where the mechanism has further access to
untrusted predictions, both in the deterministic~\cite{agrawal2022learning} and
randomized~\cite{balkanski2024randomized} settings. Besides, the Facility Location problem in 
Euclidean spaces is attractive since it is closely related to practical scenarios. Moreover, 
the consideration of Euclidean spaces opens opportunities for the design of new mechanisms 
beyond the dictatorship mechanisms. 
This direction has highlighted relevant open directions, even without the predictions framework,
and have motivated this paper.

\subsection{Preliminaries}

We consider the facility location problem in which a mechanism has access to the reported location profile of $n$ agents $\vect{x} = (x_{1}, \ldots, x_{n})$ belonging to a Euclidean space $\II$.
A randomized mechanism $\M$ then
outputs a random variable representing the facility location $F = \M(\vect{x})$ belonging to a Euclidean space $\OO\supseteq\II$. Unless explicitly stated otherwise, we assume $\OO=\II$. Its objective is to minimize the expected egalitarian cost, i.e., $\cost(\M,\vect{x}) = \E[\max_i d(F,x_i)]$, where $d$ is the Euclidean distance.

A mechanism is \emph{strategyproof}, or \emph{truthful in expectation}, if,
for every location profile $\vect{x}$, every agent $i$, and every alternative
report $x'_i\in\II$, we have
\[
\E[d(x_i, \M(\vect{x}))]
\leq \E[d(x_i, \M(x_1,\dots,x_{i-1},x'_i,x_{i+1},\dots,x_n))].
\]

This ensures that no agent has an incentive to lie on its actual location, as this would not decrease the expected distance from the facility to its actual location.
Note that stronger requirements exist such as \emph{universal truthfulness} requiring randomized mechanism to draw the facility location using truthful deterministic mechanisms, which are much more restrictive in our context.

The optimal cost assuming agents report their true locations is equal to the radius of the smallest enclosing hypersphere of $\vect{x}$: $\OPT(\vect{x}) = \min_{c\in\II}\max_i d(x_i,c)$. A mechanism \M is an $\alpha$-approximation if, for any $\vect{x}\in\II^n$, we have $\cost(\M,\vect{x})\leq \alpha \cdot \OPT(\vect{x})$.

\section{Two agents on the Euclidean plane}

\label{sec:two-agents}
\subsection{The perpendicular square lottery mechanism}
\label{sec:two-agent-plane}

\begin{figure}[htbp]
\centering
\begin{tikzpicture}[>=Latex,every node/.style={font=\small},scale=2.25]
  \coordinate (Xone) at (1,0);
  \coordinate (Xtwo) at (-1,0);
  \coordinate (M) at (0,0);
  \coordinate (Yplus) at (0,1);
  \coordinate (Yminus) at (0,-1);

  \draw[gray!65,very thick] (Xone)--(Yplus)--(Xtwo)--(Yminus)--cycle;
  \draw[blue!75!black,very thick] (Xtwo)--(Xone);
  \draw[orange!85!black,very thick,dashed] (Yminus)--(Yplus);

  \draw[->,blue!75!black,thick] (M)--(.82,0)
    node[midway,below=4pt,font=\scriptsize]
    {$\frac{x_1-x_2}{2}$};
  \draw[->,orange!85!black,thick] (M)--(0,.82)
    node[midway,left=5pt,font=\scriptsize]
    {$\frac{\|x_1-x_2\|}{2}$};
  \draw[->,black!65,thin] (.30,0)
    arc[start angle=0,end angle=90,radius=.30];
  \node[font=\scriptsize,fill=white,inner sep=.5pt] at (.30,.30)
    {$90^\circ$};

  \fill[blue!75!black] (Xone) circle (1.7pt)
    node[right=3pt] {$x_1$};
  \fill[blue!75!black] (Xtwo) circle (1.7pt)
    node[left=3pt] {$x_2$};
  \fill[orange!85!black] (Yplus) circle (1.7pt)
    node[above=2pt,align=center]
    {$y^+$\\[-1pt]{\scriptsize probability $1/2$}};
  \fill[orange!85!black] (Yminus) circle (1.7pt)
    node[below=2pt,align=center]
    {$y^-$\\[-1pt]{\scriptsize probability $1/2$}};
  \fill (M) circle (1.3pt) node[above left=2pt] {$m$};
\end{tikzpicture}
\caption{Definition of the perpendicular square lottery. The reports
$x_1,x_2$ form one diagonal of the square. The two possible facility
locations $y^+$ and $y^-$ form the other diagonal, perpendicular to
$[x_1,x_2]$, and are each selected with probability $1/2$.}
\label{fig:perpendicular-square-mechanism}
\end{figure}

We consider the standard setting with input and output spaces
$\mathcal I=\mathcal O=\mathbb R^2$ and two agents. Given reports
$x_1,x_2\in\mathbb R^2$, let
\[
  m=\frac{x_1+x_2}{2}.
\]
Let $y^+$ and $y^-$ be the two points on the line perpendicular to
the segment $[x_1,x_2]$ through $m$, one on each side of
$[x_1,x_2]$, at distance $\|x_1-x_2\|/2$ from $m$.
The \emph{perpendicular square lottery}
$\M_{\perp}$ returns $y^+$ or $y^-$, each with probability $1/2$.
Thus $x_1,x_2,y^+,y^-$ are the four vertices of a square, and the reported
segment and the output segment are its two diagonals. The construction is
shown in \Cref{fig:perpendicular-square-mechanism}.

\begin{theorem}
\label{thm:square-upper}
For two agents in $\mathbb R^2$, the perpendicular square lottery is
truthful in expectation and has exact approximation ratio $\sqrt2$
for the expected ex-post maximum cost.
\end{theorem}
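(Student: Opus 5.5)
The approximation ratio is a direct computation, and truthfulness comes down to one inequality between sums of distances; I treat them in that order.

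\emph{Approximation ratio.} For two agents, $\OPT(\vect{x})=\|x_1-x_2\|/2=:r$. Each output $y^\pm$ is a vertex of a square whose diagonal has length $2r$, so its distance to both $x_1$ and $x_2$ equals the side length $\sqrt2\,r$. The realized maximum cost is therefore always $\sqrt2\,r$, and the ratio is exactly $\sqrt2$; the degenerate case $x_1=x_2$ gives cost $0=\OPT$. Since the ratio is attained on every instance with $x_1\neq x_2$, it is exact.

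\emph{Truthfulness.} By symmetry it suffices to treat agent $1$ with true location $x_1$ and fixed report $x_2$. Translate and rotate so that $x_2=0$. Identify the plane with $\mathbb C$ and let agent $1$ report $z$. The two outputs are then $y^\pm = z/2 \pm i z/2 = z(1\pm i)/2$. This form is convenient because both outputs are linear in $z$: each is a rotation by $\pm 45^\circ$ composed with a scaling by $1/\sqrt2$. Agent $1$'s expected cost is
\[
f(z)=\tfrac12\Bigl(\bigl|x_1-\tfrac{1+i}{2}z\bigr|+\bigl|x_1-\tfrac{1-i}{2}z\bigr|\Bigr).
\]
Write $w=x_1$ for the truthful report. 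The truthful cost is $f(w)=\tfrac12\bigl(|w|\,|1-\tfrac{1+i}2|+|w|\,|1-\tfrac{1-i}2|\bigr)=|w|/\sqrt2$, since $\tfrac{1\mp i}{2}$ has modulus $1/\sqrt2$.

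We must show $f(z)\ge |w|/\sqrt2$ for every $z$. Set $a=\tfrac{1+i}{2}$ and $b=\tfrac{1-i}{2}$; note $a+b=1$ and $a-b=i$. I would eliminate $z$ with the weighted triangle inequality. Since $\bar a z$ and $\bar b z$ have a common factor, combine the two terms as
\[
|w-az|\cdot|\bar b|+|w-bz|\cdot|\bar a| \;\ge\; |\bar b(w-az)\mp\bar a(w-bz)|.
\]
Here $\bar b a=\bar a b$? We have $\bar b a=\tfrac{(1+i)(1+i)}4=\tfrac{i}{2}$ and $\bar a b=\tfrac{(1-i)(1-i)}4=-\tfrac{i}{2}$, so these are not equal. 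The combination that does work is multiplying by $b$ and $a$: $b(w-az)-a(w-bz)=(b-a)w=-iw$. The $z$ terms cancel because $abz$ appears in both. Since $|a|=|b|=1/\sqrt2$, this gives
\[
\tfrac1{\sqrt2}\bigl(|w-az|+|w-bz|\bigr)\ge |w|,
\]
that is, $f(z)\ge |w|/\sqrt2=f(w)$, which is exactly truthfulness. Geometrically, the two outputs are images of $z$ under linear maps whose difference is invertible. This is why the extra dimension helps: on a line the analogous maps would be real scalars, and no such cancellation with $|b-a|$ large is available.

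The main obstacle is finding the right linear combination that cancels $z$. It is also the step where the choice of a square matters: the distance $\|x_1-x_2\|/2$ is precisely what makes $|b-a|\cdot|w| = |a|\,|1-a|\cdot 2|w|$, so that the bound matches the truthful cost. I would check this identity first, after which the rest is bookkeeping.
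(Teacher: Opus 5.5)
Your argument is correct and is essentially the paper's proof in algebraic form. Since $a=ib$, your cancelling combination $b(w-az)-a(w-bz)$ amounts to rotating $w-y^-$ by $90^\circ$ about the fixed report, so that $|w-y^-|=|iw-y^+|$, and then applying the triangle inequality through $y^+$ to get $|w-y^+|+|iw-y^+|\ge|w-iw|=\sqrt2\,|w|$; this is exactly the paper's argument with $w=1$, and your approximation-ratio computation also matches the paper's (in a write-up, simply drop the abandoned conjugate attempt).
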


\begin{proof}
We first prove truthfulness. If $x_1=x_2$, the mecanism outputs the position of the agents and they have no incentive to lie. Otherwise, without loss of generality, assume that
$x_1=0$ and $x_2=1$. Consider a deviation $x_2'$ of agent~2, and let
$y^+$ and $y^-$ be the two possible facility locations. Thus
$x_1,x_2',y^+,y^-$ are the four vertices of a square, with $x_1$ and
$x_2'$ as opposite vertices. We denote coordinates in $\mathbb R^2$ as complex numbers in $\mathbb C$.

\begin{figure}[t]
\centering
\begin{tikzpicture}[>=Latex, every node/.style={font=\small}, scale=3.0]
  \coordinate (O) at (0,0);
  \coordinate (One) at (1,0);
  \coordinate (I) at (0,1);
  \coordinate (Q) at (1.5,-0.5);
  \coordinate (Yplus) at (1,0.5);
  \coordinate (Yminus) at (0.5,-1);

  \path[use as bounding box] (-0.18,-1.16) rectangle (1.74,1.17);
  \draw[->,gray!55] (-0.14,0)--(1.70,0);
  \draw[->,gray!55] (0,-1.10)--(0,1.12);

  \tikzset{
    square side/.style={blue!75!black,very thick,
      postaction={decorate},decoration={markings,
      mark=at position .5 with {\draw[blue!75!black,thick]
      (-2.6pt,-2.1pt)--(0,0)--(-2.6pt,2.1pt);}}},
    equal axis/.style={black,thick,
      postaction={decorate},decoration={markings,
      mark=at position .58 with {\draw[black,thick]
      (-1.6pt,-3pt)--(-1.6pt,3pt);
      \draw[black,thick](1.6pt,-3pt)--(1.6pt,3pt);}}},
    equal segment/.style={orange!85!black,very thick,
      postaction={decorate},decoration={markings,
      mark=at position .5 with {\draw[orange!85!black,thick]
      (-2pt,-3pt)--(2pt,3pt);}}}
  }

  \draw[square side] (O)--(Yplus)--(Q)--(Yminus)--cycle;
  \draw[equal axis] (O)--(One);
  \draw[equal axis] (O)--(I);
  \draw[equal segment] (Yminus)--(One);
  \draw[equal segment] (Yplus)--(I);
  \draw[black!65,very thick] (One)--(Yplus);
  \draw[black,very thick] (One)--(I)
    node[midway,above right,font=\scriptsize] {$\sqrt2$};

  \draw[blue!75!black,thick] (O) ++(-63.435:0.20)
    arc (-63.435:26.565:0.20);
  \node[blue!75!black,font=\scriptsize] at ($(O)+(-21.435:0.26)$)
    {$90^\circ$};
  \draw[black,thick] (O) ++(-63.435:0.33) arc (-63.435:0:0.33);
  \draw[black,thick] (O) ++(26.565:0.33) arc (26.565:90:0.33);
  \node[font=\scriptsize] at ($(O)+(-31.7:0.4)$) {$\theta$};
  \node[font=\scriptsize] at ($(O)+(58.3:0.4)$) {$\theta$};

  \fill (O) circle (1.2pt) node[below left] {$0$};
  \fill (One) circle (1.4pt) node[below right] {$1$};
  \fill (I) circle (1.4pt) node[above left] {$i$};
  \fill[green!45!black] (Q) circle (1.4pt) node[above right] {$x'_2$};
  \fill[blue!80!black] (Yplus) circle (1.5pt)
    node[above right] {$y^+$};
  \fill[red!75!black] (Yminus) circle (1.5pt)
    node[below] {$y^-$};

  \node[draw=black!45,rounded corners,fill=white,inner sep=2pt,
    anchor=west,font=\scriptsize] (eq) at (1.08,0.93)
    {$d(y^+,i)=d(y^-,1)$};
  \draw[-{Latex[length=1.5mm]},orange!85!black,thin]
    (eq.west) to[bend left=10] ($(Yplus)!0.5!(I)$);
\end{tikzpicture}
\caption{A normalized off-line deviation.  The reports are $0$ and
$x_2'$, and the lottery returns the other two vertices $y^+,y^-$ of
the blue square.  Rotation by $90^\circ$ sends
$(0,y^-,1)$ to $(0,y^+,i)$, so the two orange segments have equal
length.  The agent's expected cost is half the broken path
$1-y^+-i$, whose length is at least $|1-i|=\sqrt2$.}
\label{fig:square-deviation}
\end{figure}

Consider the two triangles $(0,y^-,1)$ and $(0,y^+,i)$ illustrated in
\Cref{fig:square-deviation}. Since $0,y^+,x_2',y^-$ form a square, we have
$    d(0,y^-)=d(0,y^+)$ 
and the angle
$
    \widehat{y^-0y^+}=90^\circ=\widehat{10i}.
$
As $d(0,1)=d(0,i)$,
it follows that
the triangle $(y^+,0,i)$ can be obtained from $(y^-,0,1)$ by a rotation of $90^\circ$ centered on $0$.
Therefore,
$
    d(y^-,1)=d(y^+,i).
$

Hence, the expected cost of agent~2 after the deviation is
\[
\frac{d(1,y^+)+d(1,y^-)}{2}=
\frac{d(1,y^+)+d(y^+,i)}{2}\geq \frac{d(1,i)}{2}
=\frac{1}{\sqrt{2}},
\]
where the inequality follows from the triangle inequality.

When agent~2 reports truthfully, both possible facility locations are at
distance $1/\sqrt{2}$ from its true location. Therefore, no deviation can
decrease its expected cost. The argument for agent~1 is symmetric, proving
truthfulness in expectation.

It remains to compute the approximation ratio. Under our normalization,
the two agents are at distance $1$ from each other. Each output and each
agent are at distance $1/2$ from the midpoint of $[x_1,x_2]$, in
perpendicular directions. Hence, the realized maximum cost is therefore always $1/\sqrt{2}$, whereas
the optimum is the midpoint and has cost $1/2$. The approximation ratio
is thus exactly $\sqrt{2}$.
\end{proof}

\subsection{Tightness and the line-to-plane setting}
\label{sec:two-agent-lower}

In this section, we consider the following natural assumption.

\begin{hypothesis}
\label{hyp:invariance}
The mechanism is invariant under
translations, rotations, and positive scalings.
\end{hypothesis}

Under \Cref{hyp:invariance}, we show that the $\sqrt
2$-approximation designed in the previous section for 2 agents in $\mathbb R^2$
is  the best possible using a new technique. We assume the 2 agent true
positions fixed and define a potential function $V$ on the plane, where the
function argument corresponds to a facility location output by a
mechanism. $V$ is defined as the sum of two terms, one representing the cost of
the facility (the maximum agent-facility distance), and the other being positive
if a given lie by an agent (moving away from the other agent) is beneficial, and negative otherwise. We know that
the expected value of the second term is nonpositive by truthfulness of the
mechanism. Therefore, a lower bound on the potential function implies a lower
bound on the expected cost. We then note that restricting $\II$ to $\mathbb R$ does not impact the validity of this lower bound.

\begin{theorem}
\label{thm:square-lower}
Every finite-ratio, truthful-in-expectation two-agent mechanism in
$\mathbb R^2$ that is invariant under translations, rotations, and
positive scalings has approximation ratio at least $\sqrt2$.
\end{theorem}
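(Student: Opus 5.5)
The plan is to exploit truthfulness only against one specific infinitesimal lie, in which an agent moves slightly away from the other. By \Cref{hyp:invariance}, this lie acts on the output distribution as a known affine map. The first-order truthfulness constraint then becomes a linear constraint $\E[g(F)]\ge 0$ on the output distribution $F$, for an explicit function $g$ on the plane. The conclusion follows from a pointwise lower bound on the potential $V=\mathrm{cost}-\lambda g$.

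\textbf{Normalization and the deformation.} Fix true positions $x_1=-1$, $x_2=1$, viewed as complex numbers, so that $\OPT=1$. Let $\mu$ be the output distribution on this profile and $p\sim\mu$. Invariance under translations, rotations and positive scalings means that on any report pair $(a,b)$ the output is $\frac{a+b}{2}+\frac{b-a}{2}\,p$ (complex multiplication). Suppose agent~2 deviates to $1+2\varepsilon$ with $\varepsilon>0$. The reported segment keeps its direction, has midpoint $\varepsilon$ and half-length $1+\varepsilon$. Hence the output becomes $\varepsilon+(1+\varepsilon)p=p+\varepsilon(1+p)$, with no rotation. Truthfulness gives
\[
\E\bigl[\,|p+\varepsilon(1+p)-1|-|p-1|\,\bigr]\ \ge\ 0 .
\]
We divide by $\varepsilon$ and let $\varepsilon\to 0^+$. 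The difference quotient is bounded by $|1+p|$, which is integrable because a finite ratio forces $\E|p|<\infty$, so dominated convergence applies. The one-sided derivative is $g_2(p)=\langle \frac{p-1}{|p-1|},\,1+p\rangle=\frac{|p|^2-1}{|p-1|}$ for $p\neq 1$, and it equals $2$ at $p=1$. Thus $\E[g_2(p)]\ge 0$. Symmetrically, agent~1 deviating to $-1-2\varepsilon$ gives $\E[g_1(p)]\ge 0$ with $g_1(p)=\frac{|p|^2-1}{|p+1|}$. Writing $d_1=|p+1|$ and $d_2=|p-1|$, we set $g=g_1+g_2=(|p|^2-1)\bigl(\frac1{d_1}+\frac1{d_2}\bigr)$.

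\textbf{Potential bound.} For any $\lambda\ge 0$,
\[
\E[\max(d_1,d_2)] = \E[V(p)]+\lambda\E[g(p)] \ \ge\ \inf_{p\in\mathbb C} V(p),
\qquad V(p)=\max(d_1,d_2)-\lambda(|p|^2-1)\Bigl(\tfrac1{d_1}+\tfrac1{d_2}\Bigr).
\]
It therefore suffices to find $\lambda$ with $V\ge\sqrt2$ everywhere. The target is consistent, since at $p=\pm i$ we get $g=0$ and $V=\sqrt2$, matching the square lottery of \Cref{thm:square-upper}. Two regions give easy constraints.
\begin{itemize}
\item Inside the unit disk the penalty term is nonnegative. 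At $p=0$ we get $V=1+2\lambda$, which forces $\lambda\ge(\sqrt2-1)/2$.
\item Along the imaginary axis $p=ti$ we have $V=\sqrt{1+t^2}-2\lambda\frac{t^2-1}{\sqrt{1+t^2}}$. Since $p=i$ must be a minimum, the derivative at $t=1$ must vanish. This pins down $\lambda$: the derivative is $\frac{1}{\sqrt2}-2\lambda\cdot\frac{2}{\sqrt2}=0$, so $\lambda=1/4$.
\end{itemize}
This value satisfies $\lambda\ge(\sqrt2-1)/2\approx0.207$. For large $|p|$ we have $V\approx(1-2\lambda)|p|$, which is also fine.

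\textbf{Main obstacle.} The main work is verifying $V\ge\sqrt2$ on the whole plane for $\lambda=1/4$. I would use the identity $|p|^2-1=\frac{d_1^2+d_2^2}{2}-2$ and, by symmetry, assume $d_2\le d_1$. The inequality then becomes a two-variable inequality in $(d_1,d_2)$ on the region allowed by the triangle inequality $d_1-d_2\le 2\le d_1+d_2$. I would check it by minimizing over $d_2$ for fixed $d_1$, and treat the boundary case $p$ on the real line separately. The point $p=1$, where $g_2=2$, gives $V=2-\frac14\cdot 2\cdot\frac12=1.75$, which is fine. If $\lambda=1/4$ fails somewhere, I would fall back to weighting the two agents' constraints differently or using a $p$-dependent combination, keeping the tangency at $\pm i$. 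Finally, the same argument applies verbatim when $\II=\mathbb R$, since only collinear deviations were used.
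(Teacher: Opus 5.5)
You follow the paper's proof, in rescaled coordinates. With $p=2z-1$, your $g_2(p)$ and $g_1(p)$ equal $2\Delta(z)$ and $2\Delta(1-z)$, so $g=-2T$. At corresponding points, your potential with $\lambda=1/4$ is exactly twice the paper's $V=D+T/4$, and your target $\sqrt2$ is twice its $1/\sqrt2$. The one structural difference is that you add the two agents' outward-lie constraints, whereas the paper first anonymizes the mechanism to get $F\stackrel{d}{=}1-F$. Your version is valid, since the other agent's outward lie gives $\E[\Delta(1-F)]\ge0$ directly, and it is slightly cleaner. Your dominated-convergence step and your tangency argument at $\pm i$ (showing $\lambda=1/4$ is forced) are details the paper leaves implicit.

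The gap is the step you yourself call the main work. You only plan the pointwise bound $V\ge\sqrt2$, and you hedge on whether $\lambda=1/4$ suffices. It does suffice, exactly along your plan:
\begin{itemize}
\item Put $d_1=2a\ge d_2=2b$, so $b\in[|1-a|,a]$ and $\frac14 g=\Gamma(a,b):=\frac{(a+b)(a^2+b^2-1)}{4ab}$.
\item The numerator of $\partial_b\Gamma$ is $2b^3+ab^2+a-a^3$, which is increasing in $b$. Hence $\Gamma(a,\cdot)$ is maximized at an endpoint of the interval.
\item At $b=a$ the value is $a-\frac1{2a}$. At $b=|1-a|$ it is $-\frac12$ if $a<1$ and $a-\frac12$ if $a>1$; both are at most $a-\frac1{2a}$ for $a\ge\frac12$. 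For $a=1$, use $\Gamma(1,b)=b(1+b)/4\le\frac12$ directly.
\item Therefore $V\ge 2a-\bigl(a-\frac1{2a}\bigr)=\frac{d_1}{2}+\frac1{d_1}\ge\sqrt2$.
\end{itemize}

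Two small corrections. First, at $p=\pm1$ you get $V=2-\frac14\cdot 2=\frac32$, not $1.75$; this is still above $\sqrt2$. Second, a $p$-dependent weighting of $g$ would not be a legitimate fallback. Truthfulness only yields $\E[g_k(p)]\ge0$ for the fixed functions $g_k$, not for $g_k$ multiplied by a function of $p$.
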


\begin{proof}
We may first anonymize the mechanism.  Replace $\M$ by the equal mixture of
$\M(x_1,x_2)$ and $\M(x_2,x_1)$.  In the second term, a deviation by the
first reported agent is a deviation by the second agent of the original
mechanism.  Hence both terms, and therefore their mixture, are
truthful.  This operation preserves \Cref{hyp:invariance} and cannot increase
the worst-case ratio. We identify locations in $\mathbb R ^2$ as complex numbers in $\mathbb C$ in 
order to simplify the notations.

By \Cref{hyp:invariance}, we may consider the normalized truthful
profile $(x_1,x_2)=(1,0)\in\mathbb C ^2$, and let
$F=\M(1,0)\in\mathbb C$ denote the random facility location.  The direct similarity
$y\mapsto1-y$ defined on $\mathbb C$ exchanges the two reports.  By
\Cref{hyp:invariance} and anonymity, we therefore
have
\begin{equation}
\label{eq:endpoint-symmetry}
  F\mathrel{\stackrel{d}{=}}1-F,
\end{equation}
where $\stackrel{d}{=}$ denotes equality in distribution.

Consider the agent truly located at $1\in\mathbb C$.  If it reports
$1+\varepsilon$, with $\varepsilon>0$, while the other report remains
$0$, \Cref{hyp:invariance} applied to the positive scaling
$y\mapsto(1+\varepsilon)y$ implies that the facility is distributed as
$(1+\varepsilon)F$.  For a realized output $z\in\mathbb C$, define
\begin{equation}
\label{eq:outward-cost-slope}
  \Delta(z):=\lim_{\varepsilon\downarrow0}
  \frac{|(1+\varepsilon)z-1|-|z-1|}{\varepsilon}.
\end{equation}
This is the first-order change in the agent's cost under an outward lie.  A
negative value means that the lie brings the facility closer; a positive
value means that it moves the facility away.  Direct differentiation gives
\begin{equation}
\label{eq:outward-cost-slope-formula}
  \Delta(z)=
  \begin{cases}
    \displaystyle
    \frac{\langle z,z-1\rangle}{|z-1|}
    =\frac{|z|^2+|1-z|^2-1}{2|1-z|},&z\ne1,\\[7pt]
    1,&z=1.
  \end{cases}
\end{equation}
The second line is especially simple: if the facility is already at the
agent's true position, then after the lie its location becomes
$1+\varepsilon$, so the agent's cost grows from $0$ to
$\varepsilon$.  Thus this output gives no local incentive to lie.

Truthfulness implies
\[
  \E[\bigl|(1+\varepsilon)F-1\bigr|]
  \ge \E[|F-1|].
\]
Dividing by $\varepsilon$ and letting $\varepsilon\downarrow0$ yields
\begin{equation}
\label{eq:expected-outward-slope}
  \E[\Delta(F)]\ge0.
\end{equation}

We now turn this incentive constraint into a pointwise potential.  Define the
symmetrized truthfulness correction
\begin{equation}
\label{eq:truthfulness-correction}
  T(z):=-(\Delta(1-z)+\Delta(z)).
\end{equation}
The two terms measure the outward cost speeds relative to the two endpoints.
By \Cref{eq:endpoint-symmetry,eq:expected-outward-slope},
\begin{equation}
\label{eq:expected-truthfulness-correction}
  \E[T(F)]
  =-\left(\E[\Delta(F)]+\E[\Delta(1-F)]\right)
  =-2\E[\Delta(F)]
  \le0.
\end{equation}

For a realized facility location $z\in\mathbb C$, define:
\[
  a=|z|,
  \qquad b=|1-z|,
  \qquad D(z)=\max\{a,b\}.
\]
Here $D(z)$ is exactly the realized maximum cost at the normalized profile.
For $z\in\mathbb C\setminus\{0,1\}$, \Cref{eq:outward-cost-slope-formula} gives
\begin{equation}
\label{eq:truthfulness-correction-formula}
  T(z)=-\frac{(a+b)(a^2+b^2-1)}{2ab}.
\end{equation}
We define the pointwise potential
\[
  V(z):=D(z)+\frac{T(z)}{4}.
\]

We claim that
\begin{equation}
\label{eq:geometric-potential-bound}
  V(z)\ge\frac1{\sqrt2}
  \qquad\text{for every }z\in\mathbb C.
\end{equation}

If the facility is placed at location $z$  close to the midpoint, then $D(z)$ is small but $T(z)$
is large, reflecting an incentive for an endpoint agent to report farther
outward. Far from the midpoint, this incentive disappears, but $D(z)$ is
large. The factor $1/4$ balances these two effects in the pointwise
inequality. Since truthfulness gives $\E[T(F)]/4\leq 0$, the bound
$\E[V(F)]\geq 1/\sqrt{2}$ implies $\E[D(F)]\geq 1/\sqrt{2}$. As the
optimum is $1/2$, the approximation ratio is at least $\sqrt{2}$.
See Figure~\ref{fig:potential2agents} for an illustration of these functions.

\begin{figure}[htbp]
    \centering
    \begin{subfigure}[b]{0.32\textwidth}
        \centering
        \includegraphics[width=\textwidth, page=1]{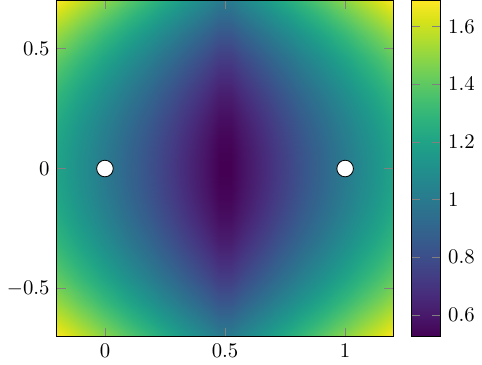}
        \caption{$D(z)$}
    \end{subfigure}
    \begin{subfigure}[b]{0.32\textwidth}
        \centering
        \includegraphics[width=\textwidth, page=2]{fig/plot_sqrt2_heatmap.pdf}
        \caption{$T(z)$, null on the circle except at $0,1$}
    \end{subfigure}
    \begin{subfigure}[b]{0.32\textwidth}
        \centering
        \includegraphics[width=\textwidth, page=3]{fig/plot_sqrt2_heatmap.pdf}
        \caption{$V(z)$, minimal on stars}
    \end{subfigure}
    \caption{Illustration of the potential function $V(z)$ and its two terms $D(z)$ and $T(z)$. The two points represent the agent locations.}
    \label{fig:potential2agents}
\end{figure}

We first prove \Cref{eq:geometric-potential-bound} away from the endpoints.  The three side lengths
$a,b,1$ satisfy $|a-b|\le1\le a+b$.  By symmetry assume $a\ge b$, so
$a\ge1/2$, $b\in[|1-a|,a]$, and $D(z)=a$.  Set
\[
  \Gamma(a,b)=\frac{(a+b)(a^2+b^2-1)}{4ab},
  \qquad
  V(z)=a-\frac12\Gamma(a,b).
\]
For fixed $a$,
\[
  \frac{\partial\Gamma}{\partial b}(a,b)
  =\frac{2b^3+ab^2+a-a^3}{4ab^2}.
\]
The numerator is increasing in $b>0$, because its derivative is
$6b^2+2ab>0$.  Hence $\Gamma(a,b)$ has no interior maximum on the
admissible interval, and it is enough to inspect the endpoints.  At $b=a$,
\[
  \Gamma(a,a)=a-\frac1{2a}.
\]
At $b=|1-a|$, its value is $-1/2$ when $1/2\le a<1$, and
$a-1/2$ when $a>1$; both are at most $a-1/(2a)$.  When $a=1$, the
same conclusion follows directly from $\Gamma(1,b)=b(1+b)/4$.  Therefore
\[
  V(z)
  \ge a-\frac12\left(a-\frac1{2a}\right)
  =\frac a2+\frac1{4a}
  \ge\frac1{\sqrt2},
\]
where the last inequality is arithmetic-geometric mean.

It remains only to check the endpoints, where we use the definition
\Cref{eq:truthfulness-correction} directly rather than the formula containing
$ab$ in the denominator.  We have $\Delta(1)=1$ and $\Delta(0)=0$, so
\[
  T(0)=T(1)=-1,
  \qquad
  V(0)=V(1)=1-\frac14=\frac34>\frac{\sqrt2}{2}.
\]
This expresses the intended tradeoff: placing the facility at an endpoint is
good for incentives because the colocated agent does not want to move its
report outward, but it is socially far from the other agent.  The potential
records both effects, and the distance term is large enough to give a strict
margin.  This completes the proof of \Cref{eq:geometric-potential-bound}.

Finally, take expectations in \Cref{eq:geometric-potential-bound}.  Using
\Cref{eq:expected-truthfulness-correction},
\[
  \E[D(F)]
  =\E[V(F)]-\frac14\E[T(F)]
  \ge\E[V(F)]
  \ge\frac1{\sqrt2}.
\]
The optimum for the profile $(1,0)$ is $1/2$.  The approximation ratio is
therefore at least $(1/\sqrt2)/(1/2)=\sqrt2$.
\end{proof}

Combining \Cref{thm:square-upper,thm:square-lower}, the optimal ratio within
this invariant class is exactly $\sqrt2$.  Equality in the pointwise
potential bound requires $a=b=1/\sqrt2$, which gives exactly the two square
vertices $(1+i)/2$ and $(1-i)/2$.

\paragraph*{Agents on the line with a facility in the plane.}
We now consider $\II=\mathbb R$ and $\OO=\mathbb R^2$ for any number
$n\geq 2$ of agents. We identify $\mathbb R$ with the horizontal axis of
$\mathbb R^2$. Given a report profile $\vect{x}=(x_1,\ldots,x_n)$, let

$$
  x_-=\min_i x_i,
  \qquad
  x_+=\max_i x_i,
  \qquad
  m=\frac{x_-+x_+}{2},
  \qquad
  a=\frac{x_+-x_-}{2}.
$$

The \emph{lifted interval mechanism} $\M_{\mathrm{int}}$ places the facility
at $(m,a)$. Equivalently, the facility lies vertically above the midpoint of
$[x_-,x_+]$, at height $(x_+-x_-)/2$.

\begin{theorem}
\label{thm:lifted-interval}
For $\II=\mathbb R$ and $\OO=\mathbb R^2$, the lifted interval mechanism is
deterministic and truthful for every $n\geq 2$, and its exact
approximation ratio is $\sqrt{2}$. This ratio is optimal under the invariance
assumptions of \Cref{thm:square-lower}.
\end{theorem}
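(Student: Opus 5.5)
The proof has three parts: truthfulness by a direct geometric argument, the exact ratio by computation, and optimality by reducing to \Cref{thm:square-lower}.

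\textbf{Truthfulness.} Fix a true profile and an agent $i$ at position $x_i\in[x_-,x_+]$. Since the profile does not change under translation, normalize so that the facility is $F=(m,a)$. The true cost of agent $i$ is $\sqrt{(x_i-m)^2+a^2}$. Suppose agent $i$ reports $x_i'$, and let $x_-'\le x_+'$ be the new extremes with midpoint $m'$ and half-width $a'$. The other agents' reports are unchanged, so every other report stays in $[x_-',x_+']$, while $x_i$ itself may fall outside this interval. The goal is to show that
\[
\sqrt{(x_i-m')^2+a'^2}\ \ge\ \sqrt{(x_i-m)^2+a^2}.
\]
The clean way is to write the squared cost as
\[
(x_i-m')^2+a'^2=(x_i-x_-')(x_i-x_+')+2a'^2+\ldots
\]
Expanding gives the simpler identity
\[
(x_i-m')^2+a'^2=\tfrac12\big((x_i-x_-')^2+(x_i-x_+')^2\big),
\]
so the squared cost is half the sum of squared distances from $x_i$ to the two extremes.

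\textbf{Case analysis.} The lie can change at most one extreme in a way that matters. If $x_i$ is not an extreme, moving $x_i'$ inside the interval changes nothing. Moving it outside replaces one extreme by a point farther from $x_i$ on the same side, so one term increases and the other is unchanged. If $x_i=x_+$ (the case $x_-$ is symmetric), the lie replaces $x_+$ by $x_+'=\max(x_i',y)$, where $y$ is the largest other report; then $x_-'$ is unchanged unless $x_i'<x_-$. In each subcase, $(x_i-x_+')^2\ge 0=(x_i-x_+)^2$. Also $x_-'\le x_-$, or $x_-'=x_i'$ when $x_i'$ becomes the new minimum, in which case $x_-'\le x_-\le x_i$ still holds. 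So $|x_i-x_-'|\ge|x_i-x_-|$. Hence neither term decreases, and the mechanism is truthful.

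\textbf{Exact ratio.} The optimum is $a$, attained at $m$. Every agent lies in $[x_-,x_+]$, so the maximum distance to $(m,a)$ is attained at an endpoint and equals $\sqrt{a^2+a^2}=\sqrt2\,a$. The ratio is therefore exactly $\sqrt2$ whenever $a>0$; when $a=0$ the facility coincides with all agents.

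\textbf{Optimality.} For the lower bound, rerun the proof of \Cref{thm:square-lower} with $\II=\mathbb R$. That argument only used reports lying on the real line: the profile $(1,0)$, the outward deviation to $1+\varepsilon$, and the anonymization swap. The potential inequality is pointwise over $z\in\mathbb C$, so it applies to facilities anywhere in $\OO=\mathbb R^2$. For $n>2$, place the extra agents at $0$ and $1$, or anywhere in $[0,1]$, and let one extreme agent deviate outward. The main obstacle is here. With extra agents, the swap symmetry $F\stackrel{d}{=}1-F$ requires an anonymization over all agents, which is not automatic. To avoid this, I would put $n-1$ agents at $0$ and one at $1$, use only the outward-deviation constraint for the agent at $1$, and symmetrize via the reflection $y\mapsto 1-y$ applied to the mirrored profile, which has the same optimum. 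The concern is whether the two resulting constraints still combine into $\E[T(F)]\le0$ for a single distribution. If they do not, I would fall back on citing that the $n=2$ bound already establishes optimality as a worst case over $n$.
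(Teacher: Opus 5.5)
Your proposal is correct and follows the paper's proof essentially step by step: the same identity $(x_i-m')^2+a'^2=\frac12\bigl((x_i-x_-')^2+(x_i-x_+')^2\bigr)$ with a monotonicity case analysis for truthfulness, the same endpoint computation for the exact ratio, and the same observation that the proof of Theorem~\ref{thm:square-lower} only uses the profile $(1,0)$ and the deviation $1+\varepsilon$ (both on the line), while its potential bound is pointwise over $\mathbb C$. The paper's optimality argument likewise only invokes that two-agent profile, so your fallback is at its level; your $n>2$ concern is nonetheless easy to settle, since with $k$ agents at $0$ and $n-k$ at $1$, moving each co-located group outward one agent at a time and chaining truthfulness (as in Section~\ref{sec:lbrd}) gives both $\E[\Delta(F)]\ge0$ and $\E[\Delta(1-F)]\ge0$ from translation and scaling invariance alone, hence $\E[T(F)]\le0$, whereas your mirrored-profile reflection only reproduces the first of these constraints.
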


\begin{proof}
Fix an agent with true location $x$ and fix the reports of all other agents.
Let $x_-$ and $x_+$ be, respectively, the smallest and largest of these
reports. If the agent reports $x'$, the new extremes are

$$
  x_-'=\min\{x_-,x'\},
  \qquad
  x_+'=\max\{x_+,x'\}.
$$

Writing

$$
  m'=\frac{x_-'+x_+'}{2},
  \qquad
  a'=\frac{x_+'-x_-'}{2},
$$

the squared cost of the agent is
\begin{align*}
\bigl|(x,0)-(m',a')\bigr|^2
&=(x-m')^2+(a')^2 \
=\frac{(x-x_-')^2+(x-x_+')^2}{2}.
\end{align*}
We show that this expression is minimized by $x'=x$. If
$x\in[x_-,x_+]$, every report inside $[x_-,x_+]$ leaves both extremes
unchanged, while a report outside this interval replaces one extreme by a
point farther from $x$. If $x<x_-$, truthful reporting gives
$x_-'=x$ and $x_+'=x_+$. For every report $x'$, we have
$x_+'\geq x_+$, and therefore

$$
  \frac{(x-x_-')^2+(x-x_+')^2}{2}
  \geq \frac{(x-x_+)^2}{2},
$$

which is precisely the squared truthful cost. The case $x>x_+$ is symmetric.
Since the square root is increasing, truthful reporting also minimizes the
agent's cost, proving truthfulness.

It remains to compute the approximation ratio. If $x_-=x_+$, the mechanism
is optimal. Otherwise, every agent lies in $[x_-,x_+]$, so its horizontal
distance from $m$ is at most $a$. Hence its cost is at most

$$
  \sqrt{a^2+a^2}=\sqrt{2}\,a,
$$

with equality for the agents at the two extremes. Any facility has maximum
cost at least half the distance $x_+-x_-=2a$ between these extremes, while
the midpoint $(m,0)$ attains maximum cost $a$. Thus $\OPT=a$, and the exact
ratio of $\M_{\mathrm{int}}$ is $\sqrt{2}$.

Finally, the lower-bound proof of \Cref{thm:square-lower} applies to the
present setting: its normalized profile has reports $0$ and $1$, and the
only deviation it uses is the outward report $1+\varepsilon$, which also lies
in $\mathbb R$. Therefore restricting the agents' reports to the line does
not affect that proof, and the ratio $\sqrt{2}$ is optimal under the same
invariance assumptions.
\end{proof}

\section{\texorpdfstring{Lower bounds for randomized strategy-proof mechanisms in $\mathbb{R}^d$}{Lower bounds for randomized strategy-proof mechanisms in Rd}}
\label{sec:lbrd}

In this section, we will show that an iterative construction of an instance in
$\mathbb R^d$ for $n$ agents leads to an asymptotically tight lower bound of
$2-\varepsilon$ for large $d$ and $n$, for truthful randomized mechanisms.

We start by showing a lower bound tuned to $\mathbb R^2$, using a simpler
construction in the same spirit but adapted to the number of dimensions
available.

\subsection{\texorpdfstring{A lower bound construction specifically tuned for $\mathbb{R}^{2}$}{A lower bound construction specifically tuned for {R}{2}}}

As discussed above, we first devise a construction in which multiple instances are
considered successively, to lower bound the approximation ratio of a truthful
mechanism in $\mathbb R^2$. The idea is to rely on the mechanism truthfulness to
transfer properties on expected facility-vertex distances from one instance to
the following one, and use such properties to devise stronger ones in the new instance.

\begin{theorem}
Every strategy-proof mechanism in $\mathbb{R}^{2}$ has an approximation ratio at least 
$1 + \frac{1+2\sqrt7}{6\sqrt3} - O\bigl( \frac1n \bigr) \approx 1.6054 - O(\frac 1n)$.
\end{theorem}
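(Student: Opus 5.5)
The plan is to chain several instances and use truthfulness to carry lower bounds on expected facility–agent distances from one instance to the next, as in the Procaccia--Tennenholtz argument on the line, but using the second dimension. The $O(1/n)$ loss comes from moving agents one at a time, so it is convenient to work with groups of $\Theta(n)$ agents co-located at a few points.

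\textbf{Step 1 (base instance).} Start with an instance $I_0$ where the $n$ agents are split into three groups of size about $n/3$ at the vertices $p_1,p_2,p_3$ of an equilateral triangle of side $1$. Then $\OPT(I_0)=1/\sqrt3$. For any facility $z$, $\sum_j d(z,p_j)$ is at least the Fermat--Torricelli value $\sqrt3$. Hence some vertex, say $p_3$, satisfies $\E[d(F,p_3)]\ge 1/\sqrt3$. If the mechanism is not symmetric, I would instead average over the three rotations, as in the anonymization step of \Cref{thm:square-lower}.

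\textbf{Step 2 (transfer by truthfulness).} Move the agents of the group at $p_3$, one by one, to a point $q=p_3+t\,u$, where $u$ is the outward unit vector from the centroid and $t\ge0$ is a parameter. Consider the agent moved at step $k$. Its true location is $p_3$ and it misreports $q$, so truthfulness gives $\E[d(p_3,F_k)]\ge \E[d(p_3,F_{k-1})]$. I need a lower bound that survives the whole chain, not just one step. To get it, I will bound the cost of the instance with $k$ moved agents by a convex combination of distances to $p_3$ and to $q$, weighted by $k/n$. Telescoping this gives a bound on the final instance $I_1$ that loses only $O(1/n)$.

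\textbf{Step 3 (final instance).} In $I_1$ (groups at $p_1,p_2,q$), I will use
\[
\max_i d(z,x_i)\ \ge\ \lambda\, d(z,q)+(1-\lambda)\max\{d(z,p_1),d(z,p_2)\}
\]
for a weight $\lambda$ to be chosen. I will then lower bound $d(z,q)$ by $d(z,p_3)+t-(\text{correction})$ using the triangle inequality along the direction $u$. The remaining pointwise term is minimized over $z$ in a potential-style bound, in the spirit of $V$ in \Cref{thm:square-lower}. The expected cost of $I_1$ then comes out of the form $\alpha(t,\lambda)-O(1/n)$.

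\textbf{Step 4 (optimization).} Next, compute $\OPT(I_1)$ as a function of $t$. This is the circumradius of the triangle $p_1p_2q$ while it is acute, and half of the longest side afterwards. Then optimize the ratio $\alpha(t,\lambda)/\OPT(I_1)$. I expect the optimal $t$ to solve a quadratic whose discriminant produces the $\sqrt7$, which would give the value $1+\frac{1+2\sqrt7}{6\sqrt3}$.

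\textbf{Main obstacle.} The hardest step is Step 2--3: making the transferred inequality strong enough. A naive triangle-inequality transfer only yields about $1.5$. To do better, the pointwise bound in Step 3 must exploit the geometry, namely that a facility which is close to $p_3$ cannot also be close to both $p_1$ and $p_2$. I would first try a weighted potential $D(z)-c\,d(z,p_3)$ and choose $c$ so that the minimum over the plane is attained at a clean point, which would pin down both $t$ and $\lambda$.
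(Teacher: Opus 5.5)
Your plan has a genuine gap, at the step you flagged as the main obstacle. With a single relocated point $q$, the inequality $d(z,q)\ge d(z,p_3)+t-(\text{correction})$ admits no uniform correction. For $z$ beyond $q$ on the ray, one has exactly $d(z,q)=d(z,p_3)-t$.

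More fundamentally, your argument carries only one fact into $I_1$, namely $\E[d(F,p_3)]\ge 1/\sqrt3$. A potential $D(z)-c\,d(z,p_3)$ is just the Lagrangian dual of that single constraint, where $D(z)=\max\{d(z,p_1),d(z,p_2),d(z,q)\}$, so it cannot beat what that constraint allows. That is too little. Let $o$ and $R$ be the minimax center and radius of $\{p_1,p_2,q\}$. Either $o$ already satisfies the constraint, or you move $o$ straight away from $p_3$ until it is at distance $1/\sqrt3$. This gives a deterministic point satisfying the constraint. Since $D$ is $1$-Lipschitz and $d(o,p_3)\ge d(p_1,p_3)-R=1-R$, its ratio is at most $1+\frac1R\min\bigl\{\frac1{\sqrt3},\,R-1+\frac1{\sqrt3}\bigr\}\le 1+\frac1{\sqrt3}\approx 1.577<1.6054$. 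So no choice of $t$ or $\lambda$ reaches the target, and the $\sqrt7$ will not come from a quadratic in $t$.

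Separately, Step 2 needs no telescoping. Each single move is a lie by an agent truly at $p_3$, so $\E[d(F,p_3)]$ is exactly nondecreasing along the chain. The $O(1/n)$ comes from elsewhere.

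Two ideas are missing.

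\emph{Spreading on a circle.} Instead of sending the chosen vertex's agents to one point, spread them uniformly on a circle of radius $r$ centered at that vertex $v$. For every realized $F$, some agent lies within $O(1/n)$ of the circle point antipodal to $F$. Hence pointwise $\max_p d(F,p)\ge d(F,v)+r-O(1/n)$, while truthfulness prevents $\E[d(F,v)]$ from decreasing. This discretization is the source of the $O(1/n)$. Taking $r$ equal to the triangle side keeps $\OPT=r$.

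\emph{A vertex distance exceeding the circumradius.} Applied to your Step 1, the circle trick only yields $1+1/\sqrt3$. The paper starts instead from two groups at $O=(0,0)$ and $A=(1,0)$, so that without loss of generality $\E[d(F,O)]\ge 1/2$. It then moves the agents at $O$ to $B$ and $C$, so that $A,B,C$ is an equilateral triangle of circumradius $1$ centered at $O$; truthfulness preserves $\E[d(F,O)]\ge 1/2$. By Nikolov's theorem, a point at distance $\rho$ from the center has average distance to the vertices at least $\phi(\rho)=\bigl(|1-\rho|+2\sqrt{\rho^2+\rho+1}\bigr)/3$. This function is convex and nondecreasing, so Jensen gives a vertex $v$ with $\E[d(F,v)]\ge\phi(1/2)=(1+2\sqrt7)/6>1$. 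The $\sqrt7$ is $2\sqrt{\rho^2+\rho+1}$ evaluated at $\rho=1/2$. Spreading the agents at $v$ on the circle of radius $\sqrt3$ then gives ratio $\bigl(\phi(1/2)+\sqrt3\bigr)/\sqrt3-O(1/n)$.
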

\begin{proof}
Consider an instance $I_0$ in which $n/2$ agents are located on $O=(0,0)$ and $n/2$ on $A=(1,0)$. There must exist one of these points for which the expected distance to the facility is at least $0.5$ by the triangle inequality and the linearity of expectation. Assume it is $O$. See Figure~\ref{fig:LB2} for an illustration.

\begin{figure}[htbp]
    \centering
    \begin{subfigure}[b]{0.32\textwidth}
        \centering
        \includegraphics[width=\textwidth, page=1]{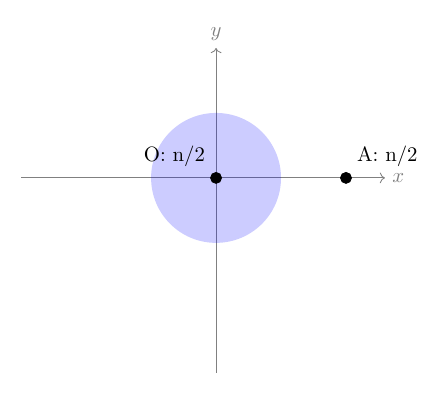}
        \caption{Instance $I_0$}
    \end{subfigure}
    \begin{subfigure}[b]{0.32\textwidth}
        \centering
        \includegraphics[width=\textwidth, page=2]{fig/plot_LB2.pdf}
        \caption{Instance $I_1$}
    \end{subfigure}
    \begin{subfigure}[b]{0.32\textwidth}
        \centering
        \includegraphics[width=\textwidth, page=3]{fig/plot_LB2.pdf}
        \caption{Instance $I_2$}
    \end{subfigure}
    \caption{Illustration of the lower bound construction. The number next to points and the dotted circle represent the number of agents at this location. The blue and red disks  represent the minimum expected distance from the facility to their center.}
    \label{fig:LB2}
\end{figure}

Consider the instance $I_1$ in which $n/2$ agents are located on $(1,0)$, and $n/4$ on each of $B=(-1/2,\frac{\sqrt{3}}{2})$ and $C=(-\frac12,-\frac{\sqrt{3}}{2})$. These vertices form an equilateral triangle centered on $O$ of side length $\sqrt{3}$ and circumradius $1$. As $I_1$ can be obtained from $I_0$ by $n/2$ consecutive lies of agents located at $(0,0)$, the expected distance from the facility to $O$ cannot decrease from $I_0$ to $I_1$: 
$$ \E_{F \sim \M(I_{1})} \left[ d(F,O)\right] \geq  \E_{F \sim \M(I_{0})} \left[ d(F,0)\right]\geq 0.5.$$ 

For a fixed facility location $F$, let $\rho=d(F,O)$. By
\cite[Theorem~1.2]{nikolov2011sum}, among all points at distance $\rho$
from the center of a regular triangle, the sum of the distances to its
three vertices is minimized when the point lies on a ray from the center
through one of the vertices. Hence,
\[
    \frac{1}{3}\sum_{v\in\{A,B,C\}} d(F,v)
    \geq
    \phi(\rho)
    :=
    \frac{|1-\rho|+2\sqrt{\rho^2+\rho+1}}{3},
\]
since in such an extremal position the three distances are
\[
    |1-\rho|,\qquad
    \sqrt{\rho^2+\rho+1},\qquad
    \sqrt{\rho^2+\rho+1}.
\]

The function $\phi$ is convex and nondecreasing on $\mathbb R_+$.
Therefore, by Jensen's inequality and
$\E_{F\sim\M(I_1)}[d(F,O)]\geq 1/2$,
\begin{align*}
\max_{v\in\{A,B,C\}}
\E_{F\sim\M(I_1)}[d(F,v)]
&\geq
\frac13\sum_{v\in\{A,B,C\}}
\E_{F\sim\M(I_1)}[d(F,v)]\\
&\geq
\E_{F\sim\M(I_1)}[\phi(d(F,O))]\\
&\geq
\phi\!\left(\E_{F\sim\M(I_1)}[d(F,O)]\right)\\
&\geq
\phi(1/2)
=
\frac{1+2\sqrt7}{6}.
\end{align*}

$$\max_{v\in\{A,B,C\}}  \E_{F \sim \M(I_{1})} \left[ d(F,v)\right] \geq \frac{1+2\sqrt{7}}{6}\approx1.049>1.$$

Let $v$ be the vertex achieving this maximal expected distance.

Now consider an instance $I_{2}$ obtained from instance $I_{1}$ by spreading $n/4$ agents 
from vertex $v$ uniformly onto the circle, centered at $v$ and of radius $\sqrt{3}$. Note that, by this uniform spreading, 
any agent on the circle has a distance of at most 
$O(1/n)$ from another neighbor-agent on the circle. The critical observation now is that for every location of the facility $F$, 
there exists an agent located at a point $p_{F}$ on the circle  ``almost'' on the opposite side of $F$ through the center $v$ of the circle. 
In other words, $p_{F}$ is of distance at most $O(1/n)$ to the intersection of the circle and the line $Fv$ farthest to $v$. 
Therefore, the expected cost of the mechanism in instance $I_{2}$ is
\begin{align*}
\E_{F \sim \M(I_{2})} \left[ \max_{p \in I_2} d(F, p) \right]
&\geq \E_{F \sim \M(I_{2})} \left[ d(F, p_{F}) \right] \\
&\geq \E_{F \sim \M(I_{2})} \left[ d(F, v) + d(v, p_{F}) - O\bigl( \frac 1n \bigr) \right] \\
&\geq \E_{F \sim \M(I_{1})} \left[ d(F, v) \right] + \sqrt{3}  - O\bigl( \frac 1n \bigr) \\
&\geq \frac{1+2\sqrt{7}}{6} + \sqrt{3}  - O\bigl( \frac1n \bigr) 
\end{align*}
The third inequality follows from $\E_{F \sim \M(I_{2})} \left[ d(F,v)\right] \geq  \E_{F \sim \M(I_{1})} \left[ d(F,v)\right]$. 
The latter holds since instance $I_2$ can be obtained from $I_1$ by $n/4$ consecutive moves of agents actually located in $v$, 
so no move should decrease the expected distance to $v$ (otherwise, an agent in $v$ can lie on its location).

Besides, in instance $I_{2}$, one can place a facility at $v$ which results in the maximum distance of $\sqrt{3}$ to any agent. 
Hence, the approximation ratio is at least:
$$\frac{\E_{F \sim \M(I_{2})} \left[ \max_{p \in I_2} d(F, p) \right]}{OPT(I_{2})}\geq \frac1{\sqrt 3}\cdot\left({\frac{1+2\sqrt{7}}{6} + \sqrt{3}}\right) - O\bigl( \frac 1n \bigr)\approx 1.6054-O(\frac1n).$$

\end{proof}

\subsection{\texorpdfstring{Asymptotically tight lower bound in high-dimensional spaces $\mathbb{R}^{d}$}{Asymptotically tight lower bound in high-dimensional spaces {R}{d}}}

We now focus on high-dimensional spaces. The construction will use regular $d$ simplexes, so we start by definitions and classic properties.

A \emph{regular $d$-simplex} with a side length $R$ is a set
$
S=\operatorname{conv}\{v_0,\dots,v_d\}\subset \mathbb{R}^d
$
such that its $(d+1)$ vertices satisfy
$
\|v_i-v_j\|= R
$
for all $i \ne j$. We also call $R$ the \emph{diameter} of the simplex. 
Equivalently, if the simplex is centered at the origin and has circumradius
$r$, then
$
\sum_{i=0}^d v_i=0,
$
and
$ \|v_i\|=r$ for all $i$. Moreover, 
$ v_i\cdot v_j $
equals $r^{2}$ if $i = j$ and equals $-r^{2}/d$ if $i \ne j$. 
For a simplex $S$, denote $r(S)$ its \emph{radius} and $c(S)$ its \emph{centroid}.
There is a relationship between the radius and the diameter of a regular $d$-simplex: $r(S) = R\sqrt{\frac{d}{2(d+1)}}$.

We now establish a technical result on simplexes, connecting the maximum distance
between any random point $P$ and a simplex vertex to the expected distance from
$P$ to the simplex center. This property will then allow us to get improved
lower bounds on the expected distance between the facility and a simplex vertex
in an iterative construction.

\begin{lemma}	\label{lem:distance-pythagore}
Given a $d$-regular simplex $S$ of radius $r$ with vertices $v_0,\dots,v_d$,
and a random point $P$ of expected distance $R$ to the center of the simplex. 
Then 
\[
\max_{0\le i\le d}\mathbb E\|P-v_i\|
\ge
\sqrt{R^2+r^2} - \frac{4r}{3\sqrt 3\,d}.
\]
In particular, when $d$ is large, this maximum expected distance is asymptotically lower bounded by $\sqrt{R^2+r^2}$.
\end{lemma}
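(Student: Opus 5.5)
The plan is to bound the maximum by the average over the $d+1$ vertices, prove a pointwise inequality for a deterministic point, and then pass to expectations with Jensen's inequality, as in the $\mathbb R^2$ construction above. Place the centroid at the origin, so $\sum_i v_i=0$, $\|v_i\|=r$ and $v_i\cdot v_j=-r^2/d$ for $i\neq j$. First, $\max_i \E\|P-v_i\|\ge \frac1{d+1}\sum_i \E\|P-v_i\| = \E\bigl[\frac1{d+1}\sum_i\|P-v_i\|\bigr]$. So it suffices to show, for every deterministic $p$ with $\|p\|=\rho$,
\[
\frac1{d+1}\sum_{i=0}^d\|p-v_i\| \;\ge\; \sqrt{\rho^2+r^2}-\frac{4r}{3\sqrt3\,d}.
\]
Given this, we take expectations and apply Jensen to the convex function $\rho\mapsto\sqrt{\rho^2+r^2}$, which gives $\E\sqrt{\|P\|^2+r^2}\ge\sqrt{R^2+r^2}$ and the claim follows.

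For the pointwise bound, set $A=\rho^2+r^2$ and $t_i=p\cdot v_i$. Then $\|p-v_i\|^2=A-2t_i=A(1-u_i)$ with $u_i=2t_i/A$. Two facts about the $t_i$ are needed. The first is $\sum_i t_i=0$, which follows from $\sum_i v_i=0$. The second is the tight-frame identity $\sum_i t_i^2=\frac{d+1}{d}r^2\rho^2$, obtained by checking that $\sum_i v_iv_i^{\top}=\frac{d+1}{d}r^2 I$ using the Gram matrix above. In addition, $|u_i|\le 2\rho r/(\rho^2+r^2)\le1$ by Cauchy--Schwarz and AM-GM.

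The main step is an elementary scalar inequality, for which I would use $\sqrt{1-u}\ge 1-\frac u2-\frac{u^2}2$ for $u\in[-1,1]$. To verify it, square the right-hand side where it is nonnegative, or check the polynomial difference on the interval; equality holds at $u=0$ and $u=1$. This inequality is the one I expect to need the most care, since a plain second-order Taylor bound does not hold uniformly near $u=1$. Averaging it over $i$, the linear terms cancel because $\sum_i u_i=0$, and the quadratic term is
\[
\frac1{d+1}\sum_i u_i^2=\frac{4\rho^2r^2}{dA^2}.
\]
Hence the average of the $\|p-v_i\|$ is at least $\sqrt A-\frac{2\rho^2r^2}{d\,A^{3/2}}$.

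Finally, we maximize the error term over $\rho$. Writing $\rho^2=xr^2$, it equals $\frac{2r}{d}\cdot\frac{x}{(1+x)^{3/2}}$. The function $x/(1+x)^{3/2}$ is maximized at $x=2$, where it equals $\frac{2}{3\sqrt3}$. This gives exactly the correction $\frac{4r}{3\sqrt3\,d}$ and completes the pointwise bound. The asymptotic statement is then immediate, since the correction vanishes as $d\to\infty$ with $r$ fixed.
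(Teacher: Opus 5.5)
Your proposal is correct and follows the paper's proof: you bound the maximum by the vertex average and use the pointwise inequality $\sqrt{a+b}\ge\sqrt a+\frac{b}{2\sqrt a}-\frac{b^2}{2a^{3/2}}$, which is your $\sqrt{1-u}\ge 1-\frac u2-\frac{u^2}2$ after normalizing by $a=A$; then $\sum_i v_i=0$ and the tight-frame identity cancel the linear term and evaluate the quadratic one, maximizing $x/(1+x)^{3/2}$ gives $\frac{4r}{3\sqrt3\,d}$, and Jensen on $\rho\mapsto\sqrt{\rho^2+r^2}$ finishes. The only differences are cosmetic: you derive the frame identity from the Gram matrix instead of citing equiangular tight frames, and you note explicitly why the second-order Taylor bound fails near $u=1$.
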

\begin{proof}
W.l.o.g, assume that the simplex is centered at the origin. 
As the simplex is regular, the following properties hold
\[
\|v_i\|=r,
\qquad
\sum_{i=0}^d v_i=0,
\qquad
\frac{1}{d+1}\sum_{i=0}^d v_i v_i^\top
=
\frac{r^2}{d}I_d.
\]

The last equation can be established by the property of simplexes being equiangular tight frames, see for instance~\cite[Definition 1.3 and Example 2 Section 2.2]{sustik2007existence}.

Consider an arbitrary (deterministic) point $p\in \mathbb{R}^d$. We have 
$$
\|p-v_i\|^2 = \|p\|^2+r^2 - 2p\cdot v_i.
$$
Using the algebraic inequality
$
\sqrt{a + b}
\ge
\sqrt a+\frac{b}{2\sqrt a} - \frac{b^2}{2a^{3/2}}
$, which is 
valid for  $a>0$ and $a+b \ge 0$, we get
\[
\|p-v_i\|
\ge
\sqrt{\|p\|^2+r^2 } - \frac{2p\cdot v_i}{2\sqrt{\|p\|^2+r^2 }}
-
\frac{(2p\cdot v_i)^2}{2(\|p\|^2+r^2 )^{3/2}}.
\]
Averaging over $i$, we get
\[
\frac1{d+1}\sum_{i=0}^d \|p-v_i\|
\ge
\sqrt{\|p\|^2+r^2 } 
- 
\frac{1}{(d+1)\sqrt{\|p\|^2+r^2}} \sum_{i=0}^d p\cdot v_i
-
\frac{2}{(d+1)( \|p\|^2+r^2)^{3/2}}\sum_{i=0}^d (p\cdot v_i)^2.
\]
Since
\begin{align*}
\sum_{i=0}^d v_i = 0,
\qquad
\frac1{d+1}\sum_{i=0}^d (p\cdot v_i)^2
=
p^\top
\left(
\frac1{d+1}\sum_{i=0}^d v_i v_i^\top
\right)p
=
\frac{r^2\|p\|^2}{d}
\end{align*}
we deduce
\[
\frac1{d+1}\sum_{i=0}^d \|p-v_i\|
\ge
\sqrt{\|p\|^2+r^2} - \frac{2r^2\|p\|^2}{d(\|p\|^2+r^2)^{3/2}}
\ge
\sqrt{\|p\|^2+r^2}
-
\frac{4r}{3\sqrt 3\,d}.
\]
where the last inequality follows from the algebraic inequality 
$
\frac{2r^2s^2}{(s^2+r^2)^{3/2}}
\le
\frac{4r}{3\sqrt 3}
$.
The latter can be established by setting $y=s/r$ to obtain
$
\frac{2r^2s^2}{(s^2+r^2)^{3/2}}
=
2r\frac{y^2}{(1+y^2)^{3/2}}.
$
Then, note that the function
$
g(y)=\frac{y^2}{(1+y^2)^{3/2}}
$
is maximized at $y=\sqrt 2$, and
$
g(\sqrt 2)=\frac{2}{3\sqrt 3}
$.
Hence
$
\frac{2r^2s^2}{(s^2+r^2)^{3/2}}
\le
2r\cdot \frac{2}{3\sqrt 3}
=
\frac{4r}{3\sqrt 3}.
$

Therefore, for every fixed $p$,
\[
\frac1{d+1}\sum_{i=0}^d \|p-v_i\|
\ge
\sqrt{\|p\|^2+r^2}
-
\frac{4r}{3\sqrt 3\,d}.
\]

Now let $P$ be a random point. Applying the previous pointwise
inequality to $P$ and taking expectations gives
\begin{align*}
\frac1{d+1}\sum_{i=0}^d \mathbb E\|P-v_i\|
\ge
\mathbb E\sqrt{\|P\|^2+r^2} - \frac{4r}{3\sqrt 3\,d} 
&\ge 
\sqrt{(\mathbb E\|P\|)^2+r^2} - \frac{4r}{3\sqrt 3\,d} \\
&= \sqrt{R^2+r^2} - \frac{4r}{3\sqrt 3\,d}
\end{align*}
where the last inequality follows Jensen's inequality applied to the convex function 
$s\mapsto \sqrt{s^2+r^2}$ on $[0,\infty)$. 
As the maximum is at least the average, the lemma follows.
\end{proof}

We are now ready to prove Theorem~\ref{thm:lower-bound-of-2}. The construction
used is similar to the previous one on the plane, except that each step creates
a new orthogonal $d$-simplex and more steps are considered. For large $d$, the
ratio between the simplex edge length and its radius is then maximized, which
ultimately leads to better guarantees. Together with the pairwise orthogonality
between simplexes, this allows to lower the radius of the final hypersphere.
Increasing the number of rounds allows to increase the final expected distance
between an agent and the facility, matching asymptotically the hypersphere
radius. 

\begin{theorem}	\label{thm:lower-bound-of-2}
For sufficiently large parameters $d, n$, 
any strategy-proof mechanism in $\mathbb{R}^{d\log d}$ with $n$ agents 
has a competitive ratio at least 
$$
2 - O\biggl( \frac{1}{d} + \frac{1}{n^{1/d\log d}} \biggr).
$$
In other words, for $\varepsilon > 0$ arbitrarily small, by choosing $d = \Omega(1/\varepsilon)$ and
$n = \Omega\bigl( (1/\varepsilon)^{d \log d} \bigr)$, the approximation ratio is at least 
$2 - \varepsilon$.
\end{theorem}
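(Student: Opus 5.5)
We generalize the $\mathbb R^2$ construction to $k=\log d$ rounds in $\mathbb R^{dk}$, which splits as $k$ mutually orthogonal copies of $\mathbb R^d$. Throughout we use the monotonicity principle from the previous proof. If some agents located at a point $v$ are moved one by one to new positions, then $\E[d(F,v)]$ cannot decrease, since otherwise one of these agents would gain by lying.

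\textbf{Setup.} Start from instance $I_0$ with all $n$ agents at a point $c_0$. Here $\E[d(F,c_0)]\ge 0=:R_0$.

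\textbf{Round $j$.} Given a point $c_{j-1}$ holding a fraction $f_{j-1}$ of the agents, with $\E[d(F,c_{j-1})]\ge R_{j-1}$, spread those agents evenly onto the $d+1$ vertices of a regular $d$-simplex $S_j$. This simplex is centered at $c_{j-1}$, lies in the $j$-th fresh orthogonal copy of $\mathbb R^d$, and has radius $r_j$. By monotonicity the expectation bound at $c_{j-1}$ persists. \Cref{lem:distance-pythagore} then yields a vertex $c_j$ with
\[
\E[d(F,c_j)]\ge R_j:=\sqrt{R_{j-1}^2+r_j^2}-\frac{4r_j}{3\sqrt3\,d}.
\]
Agents elsewhere are never moved again, and later moves only concern agents at $c_j$, so the bound at $c_j$ survives to the end.

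\textbf{Choice of radii.} Take all radii equal, $r_j=r$. Ignoring error terms, $R_j\approx r\sqrt j$ and the losses add up to $O(kr/d)$. In the final round, spread the $n/(d+1)^{k}$ agents at $c_k$ uniformly over a $(d-1)$-sphere of radius $\rho$ around $c_k$, inside a last orthogonal copy of $\mathbb R^d$ (this costs only a constant factor in dimension). With spacing $\delta=O\bigl((n/(d+1)^k)^{-1/(d-1)}\bigr)$, some agent lies within $\delta$ of the antipode of $F$, exactly as before. Hence
\[
\E[\max_p d(F,p)]\ge R_k+\rho-\delta.
\]

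\textbf{Bounding $\OPT$.} Place the facility at $c_k$. Each earlier simplex vertex $p$ differs from $c_k$ along orthogonal components. Summing squares along the chain of centers, $d(c_k,p)^2$ is at most about $\sum_j(\text{simplex-offset})^2$. Two vertices of one simplex are at distance $r\sqrt{2(d+1)/d}\approx r\sqrt2$, and consecutive centers contribute $r$ each. So $d(c_k,p)\le r\sqrt{k+1+O(1/d)}$, which gives $\OPT\le\max\{\rho,\;r\sqrt{k+1}(1+O(1/d))\}$.

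\textbf{Conclusion and main obstacle.} Choose $\rho=r\sqrt{k+1}$. The ratio is then at least
\[
\frac{r\sqrt k+\rho-O(kr/d)-\delta}{\rho}\;\ge\;2-O\!\left(\frac1k+\frac{\sqrt k}{d}\right)-\frac{\delta}{\rho}.
\]
Here lies the main obstacle: with $k$ rounds, $\sqrt k/\sqrt{k+1}$ only gives $1-O(1/k)$, which is too weak to reach $O(1/d)$. To fix this, I would let $\rho$ dominate while keeping the orthogonal bound tight. The cleaner route is to shrink the radii geometrically so that the last simplex vertex's extra distance is negligible, or to exploit $k=d$ rounds with the total dimension budget $d\log d$ reinterpreted accordingly. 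Balancing these error terms is the delicate part. The plan is to pick $k=\Theta(\log d)$ rounds and use the sharper Pythagorean accounting that measures $\OPT$ from the centroid of the final configuration rather than from $c_k$, so that $\OPT\approx R_k\approx\rho$. Then a ratio of $(R_k+\rho)/\rho\to 2$ up to $O(1/d)$ losses from \Cref{lem:distance-pythagore}. Finally, choose $n$ large enough that $\delta/\rho=O(n^{-1/(d\log d)})$, which produces the stated bound.
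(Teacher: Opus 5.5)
There is a genuine gap, which you partly diagnose yourself, and none of the fixes you list closes it. Up to $O(1/d)$ factors, write $R_k^2\approx\sum_j r_j^2$. A remaining vertex of $S_\ell$ then lies at squared distance $R_k^2+\bigl(r_\ell^2-\sum_{j<\ell}r_j^2\bigr)$ from $A_k$ (your $c_k$), because its own simplex contributes a diameter $\approx\sqrt2\,r_\ell$ rather than a radius. With equal radii the excess at $\ell=1$ is $r^2=R_k^2/k$, so you only get $2-\Theta(1/k)$.

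The missing idea is to make the radii \emph{grow} by a factor $\sqrt2$ per round. The paper gives $S_\ell$ side length $2^{\ell/2}$, so $r(S_\ell)^2\le 2^{\ell-1}$ and the excess is $r(S_1)^2\approx 1$ for every $\ell$:
\[ d(p,A_k)^2=\mathrm{diam}(S_\ell)^2+\sum_{j=\ell+1}^{k}r(S_j)^2\le 2^{\ell}+\sum_{j=\ell+1}^{k}2^{j-1}=2^{k}. \]
Meanwhile $\E[d(F,A_k)]\ge\sqrt{2^k-1}-O(2^{k/2}/d)$. The losses from \Cref{lem:distance-pythagore} now form a geometric series dominated by the last round, instead of $k$ equal terms. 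A final sphere of radius $2^{k/2}$ then gives ratio $1+\sqrt{1-2^{-k}}-O(1/d)=2-O(2^{-k}+1/d)$, and $k=\log d$ yields the claim.

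Shrinking the radii, as you suggest, goes the wrong way. With non-increasing radii, $r_1$ is the largest, so $r_1^2\ge R_k^2/k$, and the excess is at least as bad as with equal radii. Re-centering $\OPT$ does not help either. With a full-dimensional sphere (see below), a center at distance $t$ from $c_k$ pays $\rho+t$ on the sphere and gains at most $t$ on the far vertices of $S_1$. Hence $\OPT\ge(\rho+D)/2$ with $D^2\approx R_k^2+r_1^2$, and the certified ratio is at most $1+R_k/D$, which for equal radii is still $2-\Theta(1/k)$. Taking $k=d$ equal rounds instead needs $d^2$ dimensions and accumulates a loss $O(\sqrt k/d)=O(1/\sqrt d)$.

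There is also an independent error in your last step. You place the final sphere in a fresh $d$-dimensional orthogonal copy. The antipodal argument, however, needs an agent near $c_k-\rho\,(F-c_k)/\|F-c_k\|$ for every realization of $F$. That holds only if the sphere spans every direction in which $F$ may deviate from $c_k$, and the mechanism is arbitrary, so it can push $F-c_k$ orthogonally to the sphere's subspace. The farthest sphere agent is then only at distance $\sqrt{d(F,c_k)^2+\rho^2}$, so you do not get $R_k+\rho-\delta$. The paper spreads the agents of $A_k$ over a full hypersphere in $\mathbb R^{kd}$. This is exactly why its discretization error is $(n/d^k)^{-1/(kd)}$, i.e.\ the $n^{-1/(d\log d)}$ term in the statement. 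Your spacing $\delta=O\bigl((n/(d+1)^k)^{-1/(d-1)}\bigr)$ does not support the inequality you claim.
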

\begin{proof}
Let $k, d, n$ be large parameters to be defined later. 
Consider a strategy-proof mechanism $\M$ in $\mathbb R^{kd}$.
Let $(e_{1},\dots,e_{kd})$ be an orthonormal basis of $\mathbb R^{kd}$. 
Recall that a regular $d$-simplex $S$ of side length $R$ has radius $r(S)= R\sqrt{\frac{d}{2(d+1)}} =  R/\sqrt{2} - O(R/d)$
and $c(S)$ is the centroid of $S$.

Let  $S_1$ be a regular $d$-simplex of side length $2^{1/2}$ centered at the origin, named $A_0$,
and $S_{1}$ lies in the space spanned by vectors $(e_{1},\dots,e_{d})$. 
So $r(S_{1}) = 1 - O(2^{1/2}/d)$.
Consider an instance $I_{1}$ in which there are $n$ agents on each vertex of simplex $S_{1}$.  
Then, by \Cref{lem:distance-pythagore}, there exists a vertex in simplex $S_{1}$, say $A_{1}$, 
of expected distance to the facility lower-bounded by the following.
\begin{align*}
\E_{F \sim \M(I_{1})} \left[ d(F, A_{1}) \right] 
&= \max_{v_{i} \in S_{1}} \E_{F \sim \M(I_{1})} \left[ d(F, v_{i}) \right] \\
&\geq \sqrt{\E\left[d(F, c(S_{1})) \right]^2+r(S_{1})^2} - \frac{4r(S_{1})}{3\sqrt 3\,d} \\
&\geq \sqrt{r(S_{1})^2} - \frac{4r(S_{1})}{3\sqrt 3\,d}
\geq 1 - O\biggl(\frac{r(S_{1})}{d}\biggr)
\end{align*}  

See Figure~\ref{fig:LBd} for an illustration.

\begin{figure}[htbp]
    \centering
    \begin{subfigure}[b]{0.24\textwidth}
        \centering
        \includegraphics[width=\textwidth, page=1]{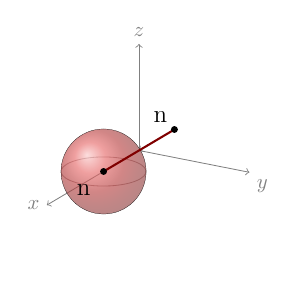}
        \caption{Instance $I_1$}
    \end{subfigure}
    \begin{subfigure}[b]{0.24\textwidth}
        \centering
        \includegraphics[width=\textwidth, page=2]{fig/plot_LB.pdf}
        \caption{Instance $I_2$}
    \end{subfigure}
    \begin{subfigure}[b]{0.24\textwidth}
        \centering
        \includegraphics[width=\textwidth, page=3]{fig/plot_LB.pdf}
        \caption{Instance $I_3$}
    \end{subfigure}
    \begin{subfigure}[b]{0.24\textwidth}
        \centering
        \includegraphics[width=\textwidth, page=4]{fig/plot_LB.pdf}
        \caption{Final instance $I$}
    \end{subfigure}
    \caption{Illustration of the lower bound construction with $d=1$ and $k=3$. The segments represent the simplexes $S_i$, the first three spheres centered on $A_1$, $A_2$, $A_3$  represent the minimum average distance between the facility location and the sphere center in the corresponding instance. The actual construction requires $d>1$ and 4 dimensions to have spheres of increasing radius. The last sphere represent the hypersphere $\mathbb S$. The numbers next to points and $\mathbb S$ represents the number of agents at this location. For large $d$ and $k$, the sphere centered on $A_k$ and $\mathbb S$ are indistinguishable.}
    \label{fig:LBd}
\end{figure}

Let $S_{2}$ be a $d$-regular simplex of side length $2^{2/2}$ centered at $A_1$ 
and  lying in the space spanned by vectors $(e_{d+1},\dots,e_{2d})$. Note that $r(S_{2}) = 2^{1/2} - O(2^{2/2}/d)$.
Intuitively, simplex $S_{2}$ is orthogonal to simplex $S_{1}$. 
Consider the instance $I_{2}$ in which there are $n$ agents on all vertices of $S_{1}$ except $A_{1}$ 
and $n/(d+1)$ agents on each vertex of $S_{2}$.
Then there exists a vertex $S_{2}$, say $A_{2}$, having the distance to the facility 
at least the average distance from vertices in $S_{2}$ to the facility location $\M(I_{2})$. 
Specifically, 
\begin{align*} 
\E_{F \sim \M(I_{2})} &\left[ d(F, A_{2}) \right] 
= \max_{v_{i} \in S_{2}} \E_{F \sim \M(I_{2})} \left[ d(F, v_{i}) \right]  \\
&\geq \sqrt{r(S_2)^2+ \left( \E_{F \sim \M(I_{2})} \left[ d(F, c(S_{2}))\right] \right)^{2}}  - \frac{4r(S_{2})}{3\sqrt 3\,d}  \\
&\geq \sqrt{r(S_2)^2+ \left( \E_{F \sim \M(I_{1})} \left[ d(F,A_{1})\right] \right)^{2}} - \frac{4r(S_{2})}{3\sqrt 3\,d}\\
&\geq \sqrt{2 - O(2^{2/2}/d)+1 - O(r(S_{1})/d)}  - O(r(S_{2})/d) \\
&\geq \sqrt{3} - O\bigl( (r(S_{1}) + r(S_{2}))/d \bigr).
\end{align*}
where
\begin{itemize}
	\item the first inequality is by \Cref{lem:distance-pythagore};
	\item the second inequality holds by the strategy-proofness of mechanism $\M$ and note that the center $c(S_{2})$ is $A_{1}$. 
		Specifically, it must hold that  $ \E_{F \sim \M(I_{2})} \left[ d(F,A_{1})\right] \geq  \E_{F \sim \M(I_{1})} \left[ d(F,A_{1})\right] $ 
		since instance $I_2$ can be obtained from $I_1$ by $n$ consecutive moves of agents actually located in $A_1$, 
		so no move should decrease the expected distance to $A_1$ (otherwise, an agent in $A_{1}$ can lie on its location).
	\item the last inequality follows $\E_{F \sim \M(I_{1})} \left[ d(F,A_{1})\right] \geq 1 - O(r(S_{1})/d)$.
\end{itemize}

We repeat this process $k$ steps and prove the following by induction. 
At step $\ell \geq 1$, let $A_{\ell}$ be a vertex in the $d$-regular simplex $S_{\ell}$ 
that have the maximum expected distance to facility $F \sim \M(I_{\ell})$. 
Assume that 
$\E_{F \sim \M(I_{\ell})} \left[ d(F, A_{\ell}) \right] \geq \sqrt{2^{\ell} - 1} - O\bigl( (r(S_{1}) + r(S_{2}) + \ldots + r(S_{\ell}))/d \bigr)$.

Now let $S_{\ell+1}$ be a $d$-regular simplex of side length $2^{(\ell+1)/2}$ centered at $A_{\ell}$ 
and $S_{\ell+1}$ lies in the space spanned by vectors $(e_{\ell d+1},\dots,e_{(\ell+1) d})$. 
Intuitively, simplex $S_{\ell+1}$ is orthogonal to other simplexes $S_{\ell}, \ldots, S_{1}$. 
Moreover, $r(S_{\ell + 1}) = 2^{\ell/2} - O\bigl( 2^{(\ell+1)/2}/d\bigr)$, i.e., the distance from center $A_{\ell}$ to any vertex of $S_{\ell+1}$ is at most $2^{\ell/2}$.
Consider the instance $I_{\ell+1}$ in which there are the same sets of agents 
as in instance $I_{\ell}$ except that
there is no agent on vertex $A_{\ell}$ and 
there are additionally $n/(d+1)^{\ell}$ agents on each vertex of $S_{\ell+1}$. 
We will prove that there exists a vertex 
$A_{\ell+1}$ in simplex $S_{\ell+1}$ such that 
\begin{align}	\label{ineq:lb-2-recurrent}
\E_{F \sim \M(I_{\ell+1})} \left[ d(F, A_{\ell+1}) \right] \geq \sqrt{2^{\ell + 1} - 1} - O\bigl( (r(S_{1}) + \ldots + r(S_{\ell}) + r(S_{\ell+1}))/d \bigr).
\end{align}

Specifically, let $A_{\ell+1}$ be a vertex in simplex $S_{\ell+1}$ with the maximum expected distance to the facility 
$F \sim \M(I_{\ell+1})$. We have:
\begin{align*} 
\E_{F \sim \M(I_{\ell + 1})} &\left[ d(F, A_{\ell+1}) \right] 
= \max_{v_{i} \in S_{\ell + 1}} \E_{F \sim \M(I_{\ell + 1})} \left[ d(F, v_{i}) \right] \\
&\geq \sqrt{r(S_{\ell + 1})^2 + \left( \E_{F \sim \M(I_{\ell+1})} \left[ d(F,c(S_{\ell + 1}))\right] \right)^{2}} - \frac{4r(S_{\ell + 1})}{3\sqrt 3\,d} \\ 
&\geq \sqrt{r(S_{\ell + 1})^2 + \left( \E_{F \sim \M(I_{\ell})} \left[ d(F,A_{\ell})\right] \right)^{2}} - \frac{4r(S_{\ell + 1})}{3\sqrt 3\,d} \\
&\geq \sqrt{2^{\ell} - O\bigl( 2^{(\ell+1)/2}/d\bigr) + 2^{\ell} - 1 - O\bigl(2^{\ell/2} (r(S_{1}) + \ldots + r(S_{\ell}))/d \bigr)} - O(r(S_{\ell+1})/d) \\
&\geq \sqrt{2^{\ell+1} - 1} - O\bigl( (r(S_{1}) + \ldots + r(S_{\ell}) + r(S_{\ell+1}))/d \bigr),
\end{align*}

where
\begin{itemize}
	\item the first inequality is due to \Cref{lem:distance-pythagore};
	\item the second inequality holds by the strategy-proofness of mechanism $\M$ and note that the center $c(S_{\ell+1})$ is $A_{\ell}$. 
		Specifically, it must hold that  $\E_{F \sim \M(I_{\ell + 1})} \left[ d(F,A_{\ell})\right] \geq  \E_{F \sim \M(I_{\ell})} \left[ d(F,A_{\ell})\right] $ 
		since instance $I_{\ell+1}$ can be obtained from $I_\ell$ by $n/(d+1)^{\ell-1}$ consecutive moves of agents actually located in $A_\ell$, 
		so no move should decrease the expected distance to $A_\ell$ (otherwise, an agent in $A_{\ell}$ can lie on its location).
	\item the last inequality follows from \Cref{ineq:lb-2-recurrent}.
\end{itemize}

We stop the process after $k$ steps. 
Now consider a hypersphere $\mathbb{S}$ centered at $A_{k}$ of radius $2^{k/2}$ --- side length (or diameter) of $S_{k}$. 

We claim that all vertices of simplexes $\cup_{\ell = 1}^{k} S_{\ell}$ belong to $\mathbb{S}$. 
Recall that, by our construction, simplex $S_{\ell}$ is orthogonal to all other simplexes. 
By Pythagore theorem, for any simplex $S_{\ell}$ for $1 \leq \ell \leq k$ and for any vertex $p \in S_{\ell}$, we have 
\begin{align}	\label{ineq:lb-2-opt}
d(p,A_{k})^{2} 
&= d(p,A_{\ell})^2 + d(A_{\ell},A_{\ell+1})^2 + \ldots + d(A_{k-1},A_{k})^2	\notag \\
&= \text{diam}(S_\ell)^2 + r(S_{\ell+1})^2 + \ldots + r(S_{k})^2 \notag \\
&\leq 2^{\ell +1}  + 2^{\ell+1} + \dots +2^{k-1} 
\leq 2^{k}.
\end{align}

Consider the final instance $I$ obtained from $I_{k}$ by spreading agents on $A_{k}$ 
uniformly on the surface of the hypersphere $\mathbb{S}$ 
(all agents in $I_{k}$ located in positions other than $A_{k}$ remain at their place). 

By the observation from \Cref{ineq:lb-2-opt}, 
an optimal solution of instance $I$ can place the facility at location $A_{k}$ which results in the 
objective (maximum distance to any agent) of $2^{k/2}$. 

We are now bounding the expected objective of mechanism $\M$ on instance $I$.
By the strategy-proofness of $\M$, we have 
\begin{align}	\label{ineq:facility-center}
\E_{F \sim \M(I)} \left[ d(F, A_{k}) \right] 
&\geq \E_{F \sim \M(I_{k})} \left[ d(F, A_{k}) \right] 	\notag \\
&\geq \sqrt{2^{k} - 1} - O\bigl( (r(S_{1}) + \ldots + r(S_{k}))/d \bigr).
\end{align}
Observe that, in instance $I$, $n/(d+1)^{k}$ agents are spread out uniformly on the sphere
$\mathbb{S}$ so any point in the hypersphere $\mathbb{S}$ has a distance at most of $O(2^{k/2} \cdot (n/d^{k})^{-1/kd})$ to 
a location of an agent (where recall that $2^{k/2}$ is the radius of the sphere).    
Hence, for any sample of location of facility $F$, there exists an agent 
located at location $p_{F} \in \mathbb{S}$ in the opposite side of $F$ through the center $A_{k}$ of the sphere such that the distance 
$d(F, p_{F}) = d(F, A_{k}) + d(A_{k}, p_{F}) - O(2^{k/2} \cdot (n/d^{k})^{-1/kd})$. 
Therefore, the expected objective of the mechanism $\M$ is 
\begin{align*}
&\E_{F \sim \M(I)} \left[ \max_{p \in I} d(F, p) \right]
\geq \E_{F \sim \M(I)} \left[ d(F, p_{F}) \right] \\
&\geq \E_{F \sim \M(I)} \left[ d(F, A_{k}) + d(A_{k}, p_{F}) - O\bigl( 2^{k/2} \cdot (n/d^{k})^{-1/kd} \bigr) \right] \\
&= \E_{F \sim \M(I)} \left[ d(F, A_{k}) \right] + 2^{k/2}  - O\bigl( 2^{k/2} \cdot (n/d^{k})^{-1/kd} \bigr) \\
&\geq  \sqrt{2^{k}-1} - O\bigl( (r(S_{1}) + \ldots + r(S_{k}))/d \bigr) + 2^{k/2}  - O\bigl( 2^{k/2} \cdot (n/d^{k})^{-1/kd} \bigr) \\
&\geq 2\cdot 2^{k/2} - O\bigl(2^{-k/2}+2^{(k+1)/2}/d +  2^{k/2} \cdot (n/d^{k})^{-1/kd} \bigr) 
\end{align*}
where the third inequality is due to \Cref{ineq:facility-center} and in the last inequality, we use $r(S_{\ell}) \leq 2^{\ell/2}$.

Hence, the approximation ratio of mechanism $\M$ is at least: 
\begin{align*}
\frac{2\cdot 2^{k/2} - O\bigl(2^{-k/2}+2^{(k+1)/2}/d +  2^{k/2} \cdot (n/d^{k})^{-1/kd} \bigr) }{2^{k/2}}
&= 2 - O\biggl(2^{-k} + \frac{1}{d} + \frac{n^{-1/kd}}{d^{-1/d}} \biggr) \\
&= 2 - O\biggl(2^{-k} + \frac{1}{d} + \frac{1}{n^{1/kd}} \biggr).
\end{align*}

For $\varepsilon > 0$ arbitrarily small, by choosing $d = \Omega(1/\varepsilon)$, $k = \log d$ and 
$n = \Omega\bigl( (1/\varepsilon)^{d \log d} \bigr)$, the approximation ratio is at least 
$2 - \varepsilon$.

\end{proof}

\section{\texorpdfstring{Better algorithms for $n$ agents in Euclidean spaces}{Better algorithms for n agents in Euclidean spaces}}

The best known mechanism for a large number of agents is the \emph{Centroid} mechanism: with a probability $1/2$, the mechanism outputs the centroid of the reported locations (arithmetic average), and with probability $1/2$, it outputs one agent location drawn uniformly at random (so a probability $1/2n$ per agent)~\cite{tang2020characterization}. It achieves an approximation factor of $2-1/n$ for $n$ agents.

In this section, we introduce novel mechanisms that improve over this bound, focusing first on the Euclidean plane, then allowing an additional dimension for the facility while the agents stay on the plane, and finally allowing this additional dimension when the agents now live in any $d$-dimensional Euclidean space.

\subsection{The \emph{Box-center} mechanism on the Euclidean plane}
\label{sec:R2R2UB}

We consider the following \emph{Box-center} mechanism: with a probability $2/(n+2)$, the mechanism outputs the center of the axis-aligned bounding box of the reported locations, and outputs each agent with probability $1/(n+2)$.

Using the bounding box to design a mechanism with the egalitarian objective has not been used successfully up to our knowledge until~\cite{balkanski2024randomized} considered a learning-augmented variant of the problem, where the hint on the extreme points allow to mitigate the impact of interior agent lies. We show here that outputing a \emph{center} with a probability much lower than the value of $1/2$ used in the centroid mechanism actually leads to an improved performance for $n>3$. The rationale is that the box center is generally much closer to the optimal facility location, but is more impacted by agent lies.

\begin{theorem}
\label{th:boxcenterR2}
The \emph{Box-center} mechanism on $\mathbb R^2$ is strategyproof and has an approximation factor at most $2-\frac{3-\sqrt 2}{n+2}$.
\end{theorem}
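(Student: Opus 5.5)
\emph{Truthfulness.} The plan is to argue one agent at a time. Fix the true location $x$ of agent $i$ and the reports of the other agents. Let $c$ be the box center when $i$ reports truthfully and $c'$ the box center when it reports $x'$. The agent's expected cost is
\[
\frac{1}{n+2}\Bigl(\sum_{j\neq i} d(x,x_j) + d(x,\text{report}_i)\Bigr) + \frac{2}{n+2}\, d(x,\text{box center}).
\]
The sum over $j\neq i$ does not depend on the report. The own-dictatorship term goes from $0$ to $d(x,x')$. The key observation is that the box center is coordinatewise separable: each coordinate of $c$ is $(\min+\max)/2$ of the reported coordinates along that axis. Changing one report by $\delta_k$ in coordinate $k$ therefore moves that coordinate of the center by at most $|\delta_k|/2$. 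Hence $d(c,c')\le \tfrac12 d(x,x')$, and by the triangle inequality the center term decreases by at most $\frac{2}{n+2}\cdot\frac12 d(x,x') = \frac{d(x,x')}{n+2}$. This is exactly compensated by the increase $\frac{d(x,x')}{n+2}$ of the dictatorship term, which proves strategyproofness. This step is routine.

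\emph{Approximation.} Normalize so that $\OPT=1$ with the minimum enclosing disk centered at the origin $o$. Write $D_j=\max_i d(x_j,x_i)$ for the cost of outputting agent $j$, and $D_c=\max_i d(c,x_i)$ for the cost of outputting the box center $c$. The target inequality is
\[
\sum_j D_j + 2D_c \le 2n+1+\sqrt2 .
\]
The crude bounds are $D_j\le 2$ for every $j$, and $D_c\le \sqrt2$ because every agent lies in the bounding box, whose half-diagonal is at most $\sqrt2$. Together these only give $2n+2\sqrt2$, which falls short of the target by $2\sqrt2-1-\sqrt2 = \sqrt2-1$. So the plan is to keep $D_j\le 2$ for all but one well-chosen agent $p$, and to prove the key claim
\[
D_p + 2D_c \le 3+\sqrt2 .
\]

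\emph{Choice of $p$ (main obstacle).} I will take $p$ to be an agent at maximal distance from $c$, so that $D_c=d(p,c)$. The claim then becomes a purely geometric optimization over $p$, $c$ and the remaining agents in the unit disk, with $c$ constrained to be the bounding-box center. The intended tension is the following. If $D_c$ is close to $\sqrt2$, the box must be nearly the $2\times2$ square circumscribing the disk, and $p$ sits near one of its corners. But agents in the unit disk cannot be near square corners, which forces $D_c$ to be noticeably smaller whenever $D_p$ is near $2$.

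Concretely, I would parametrize by the extremal agents defining the four box sides, which lie on or inside the unit circle. I would then bound $d(p,c)$ via the half-diagonal $\sqrt{w^2+h^2}/2$ of the box, bound $D_p$ by $\min\{2,\ |p|+1\}$, and also use $d(p,c)\le |p|+|c|$. The inequality would then be verified by a case split on whether $|c|$ is small, where $p$ near the circle forces $D_c\approx1$, or large, where the box is thin and $D_c$ is reduced. If maximizing over $p$ fails in some configuration, my fallback is to average the bound $D_j+2D_c$ over the (at most four) extremal agents defining the box, since the sum only needs to be improved by $\sqrt2-1$ in total. I expect this geometric inequality, and in particular pinning down the extremal configuration that yields exactly $3+\sqrt2$, to be the hardest step.
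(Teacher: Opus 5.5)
Your truthfulness argument is correct and matches the paper's. The approximation part has a genuine gap. Everything rests on the claim $D_p+2D_c\le 3+\sqrt2$, which you leave unproven, and the tools you list for it do not suffice.

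Take agents at $(1,0)$, $(-1,0)$, $(0,-1)$ and $(\sqrt3/2,\tfrac12)$. The unit disk is the minimum enclosing disk, the box is $[-1,1]\times[-1,\tfrac12]$, and $c=(0,-\tfrac14)$. The agent farthest from $c$ is $p=(\sqrt3/2,\tfrac12)$, with $|p|=1$. Here:
\begin{itemize}
\item the half-diagonal and $|p|+|c|$ both equal $\tfrac54$;
\item $\min\{2,|p|+1\}=2$;
\item so your bounds give only $D_p+2D_c\le\tfrac92>3+\sqrt2$.
\end{itemize}
Your case split does not rescue this instance either. $|c|=\tfrac14$ is not ``small'': the bound $1+|c|$ only works for $|c|\le(\sqrt2-1)/2$. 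Nor is the $2\times\tfrac32$ box thin. The true value $d(p,c)=\sqrt{21}/4\approx1.146$ is fine, but nothing in your sketch explains why. The averaging fallback is too vague to count as an argument.

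The missing idea is that you need no tradeoff between $D_p$ and $D_c$: the bound $D_c\le\frac{1+\sqrt2}{2}$ holds outright, and then $D_j\le2$ for every $j$ finishes the proof. Your intuition that agents in the unit disk cannot sit near the corners of $[-1,1]^2$ is right, but it bounds $D_c$ alone. The half-diagonal bound forgets where $p$ sits in the box, and $|p|+|c|$ forgets the box; combine them coordinatewise.

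Since $-1\le L_k\le p_k\le U_k\le 1$ and $c_k=(L_k+U_k)/2$, you get $|p_k-c_k|\le(1+|p_k|)/2$ for $k=1,2$. Using $|p_1|+|p_2|\le\sqrt2\,|p|\le\sqrt2$, this gives
\[
d(p,c)^2\le\tfrac14\bigl(2+2(|p_1|+|p_2|)+|p|^2\bigr)\le\tfrac14\bigl(3+2\sqrt2\bigr)=\Bigl(\tfrac{1+\sqrt2}{2}\Bigr)^2 .
\]
This is exactly the paper's argument, after flipping axes so that $p$ has nonnegative coordinates. It yields $\sum_j D_j+2D_c\le 2n+1+\sqrt2$ directly.
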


\begin{proof}

We first show that the \emph{Box-center} mechanism is strategyproof. Denote the agent true locations at $\vect{x} = (x_{1}, \ldots, x_{n})$. If agent $i$ lies to $x'_i$, the box-center position (chosen by the mechanism with probability $2/(n+2)$) is modified by a distance at most $\frac 12||x_i-x'_i||$. The other agent locations are unchanged, and the agent $i$ reported location (chosen by the mechanism with probability $1/(n+2)$) is then by definition at a distance exactly $||x_i-x'_i||$ away from $x_i$. Therefore, the expected distance of the reported facility to $x_i$ did not decrease as $\frac1{n+2}||x_i-x'_i||\geq \frac2{n+2}\cdot \frac{1}{2}||x_i-x'_i||$,  establishing strategyproofness.

If $OPT=0$, all reports coincide and the mechanism is optimal. Otherwise, we now show an upper bound on the approximation factor. We assume without loss of generality that the smallest enclosing disk of the agent locations is the unit disk. We upper bound the distance between the axis-aligned bounding box center $C=(c_1,c_2)$ and the furthest agent $x_i$ at coordinates $(a,b)$, see Figure~\ref{fig:boxcenterR2}. Note that $a^2+b^2\leq1$ by the enclosing disk assumption. Assume as well, by flipping the axis directions, that $a$ and $b$ are nonnegative. All reported coordinates belong to $[-1,1]$, while $(a,b)$
is itself a reported location. Since $a,b\geq 0$, it follows that
$|a-c_1|\leq (1+a)/2$ and $|b-c_2|\leq (1+b)/2$. Therefore,
\begin{align*}
d^*(C)
&\leq \sqrt{\frac{(1+a)^2}{4}+\frac{(1+b)^2}{4}}.
\end{align*}

Under the constraint $a^2+b^2\leq 1$, we have
\[(1+a)^2+(1+b)^2 =2+(a^2+b^2)+2(a+b) \leq 3+2\sqrt{2(a^2+b^2)} \leq 3+2\sqrt{2} .\]
Equality is attained for $a=b=1/\sqrt{2}$. 
Therefore,
\[
d^*(C)
\leq \frac{1}{2}\sqrt{3+2\sqrt{2}}
= \frac{1}{2}\sqrt{(1+\sqrt{2})^2}
= \frac{1+\sqrt{2}}{2}.
\]

As the maximum distance between any two agents is at most $2$, we get that the expected distance between the output facility location and the furthest agent is at most:
\[
d^* \leq \frac{2}{n+2}\cdot d^*(C)+\frac{n}{n+2}\cdot 2  \leq \frac{1+\sqrt2+2n}{n+2}  = 2-\frac{3-\sqrt 2}{n+2}.
\]
\begin{figure}
\centering
\includegraphics[width=.4\linewidth]{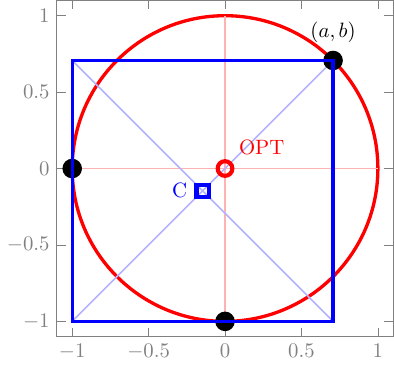}
\caption{Illustration of the \emph{Box-center} mechanism: the agents in dots, the enclosing circle in red and the axis-aligned bounding box in blue, in the case where the bounding box center $C$ is the furthest from the optimal facility location.}
\label{fig:boxcenterR2}
\end{figure}
\end{proof}

\begin{corollary}
The \emph{Box-center} mechanism on $\mathbb R^3$ is strategyproof and has an approximation factor at most $2-\frac{3-\sqrt 3}{n+2}$.
\end{corollary}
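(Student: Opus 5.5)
The plan is to mirror the proof of Theorem~\ref{th:boxcenterR2}, now using the axis-aligned bounding box in $\mathbb R^3$. The mechanism outputs the box center with probability $2/(n+2)$ and each reported location with probability $1/(n+2)$.

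\emph{Strategyproofness.} The argument is coordinate-wise and holds in any dimension. If agent $i$ moves its report from $x_i$ to $x'_i$, each coordinate of the bounding-box center is the midpoint of that coordinate's min and max. Each such midpoint moves by at most half the change $|x_{i,j}-x'_{i,j}|$ in that coordinate. Hence the center moves by at most $\frac12\|x_i-x'_i\|$ in Euclidean norm. The agent's own report moves by exactly $\|x_i-x'_i\|$. Applying the triangle inequality to each outcome, the expected distance to $x_i$ can decrease by at most $\frac{2}{n+2}\cdot\frac12\|x_i-x'_i\|$, and it increases by exactly $\frac1{n+2}\|x_i-x'_i\|$ from the agent's own point. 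So lying is never beneficial.

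\emph{Approximation.} Normalize so that the smallest enclosing ball is the unit ball centered at the origin; this requires $\OPT>0$, and otherwise the mechanism is trivially optimal. Let $C=(c_1,c_2,c_3)$ be the box center and $x=(a,b,c)$ the agent farthest from $C$. Flipping axes, assume $a,b,c\geq 0$. All coordinates of all reports lie in $[-1,1]$, and $x$ itself is a report, so $|a-c_1|\le (1+a)/2$, and likewise for the other two coordinates. This gives
\[
d^*(C)^2\le \tfrac14\bigl((1+a)^2+(1+b)^2+(1+c)^2\bigr)=\tfrac14\bigl(3+(a^2+b^2+c^2)+2(a+b+c)\bigr).
\]
Using $a^2+b^2+c^2\le 1$ and Cauchy--Schwarz, $a+b+c\le\sqrt3$, so $d^*(C)^2\le \frac14(4+2\sqrt3)=\frac14(1+\sqrt3)^2$, i.e.\ $d^*(C)\le (1+\sqrt3)/2$.

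For the random-agent outcomes, every pair of agents is within distance $2$, so the maximum distance from any agent output to the farthest agent is at most $2$. Combining,
\[
\frac{2}{n+2}\cdot\frac{1+\sqrt3}{2}+\frac{n}{n+2}\cdot 2=\frac{2n+1+\sqrt3}{n+2}=2-\frac{3-\sqrt3}{n+2}.
\]
Since $\OPT=1$, this is the claimed ratio.

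There is no real obstacle. The only point needing care is the bound $|a-c_1|\le(1+a)/2$. Writing $c_1=(m+M)/2$ with $m\ge -1$ and $a\le M\le 1$, we have $a-c_1\le (a+1)/2$ and $c_1-a\le (1-a)/2\le(1+a)/2$, so the bound holds.
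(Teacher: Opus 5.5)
Your proposal is correct and follows essentially the same route as the paper, which says only that the proof of Theorem~\ref{th:boxcenterR2} adapts directly, with the farthest agent at $(a,b,c)$, $\|x_i\|\le 1$, and the bound maximized at $a=b=c=1/\sqrt3$. Your Cauchy--Schwarz step $a+b+c\le\sqrt3$ is exactly that extremal computation, and you additionally spell out the strategyproofness argument and the coordinate bound $|a-c_1|\le(1+a)/2$, both of which the paper leaves implicit.
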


\begin{proof}
The  proof of Theorem~\ref{th:boxcenterR2} can be directly adapted, replacing the coordinates of the furthest agent $x_i$ by $(a,b,c)$ with $||x_i||^2\leq1$. Similarly to the case $d=2$, the maximum is obtained with $a=b=c=\frac{1}{\sqrt{3}}$, leading to: $$d^*(C)\leq\sqrt{\frac{(a+1)^2}{4}+\frac{(b+1)^2}{4}+\frac{(c+1)^2}{4}}\leq\frac12 \cdot\sqrt{3}\cdot \left(\frac1{\sqrt 3}+1\right)=\frac{1+\sqrt3}2. $$
\end{proof} 

\subsection{\texorpdfstring{The \emph{Lifted Box-center} mechanism for planar agents and $\mathbb R^3$ facility}{The {Lifted Box-center} mechanism for planar agents and  R3 facility}}
\label{sec:r2-r3}

We focus in this section on the case $\II=\mathbb R^2$ and $\OO=\mathbb R^3$.
Reports remain in the horizontal plane: a location $x_i \in\mathbb R^2$ is
identified with $(x_i,0)\in\mathbb R^3$, while the mechanism may place the
facility above that plane.

We reuse the bounding box at the basis of the previous algorithm, but exploit the additional dimension to penalize the spread of the agents, instead of drawing agents uniformly at random as previously. Specifically, the facility is placed at the bounding box center, lifted by a distance equal to half the bounding box diagonal. We show that such a mechanism is truthful and achieves an approximation factor of $\sqrt{3}$. Unfortunately, this idea generalized to higher dimensions leads to higher approximation ratios, worse than the classic $2-1/n$ upper bound.

We formally define the \emph{Lifted Box-center} mechanism as follows, see \Cref{fig:liftedbox} for an illustration.
    
For $n\ge2$ reports $x_i=(x_{i,1},x_{i,2})$, define, for
$k\in\{1,2\}$,
\[
  L_k=\min_i x_{i,k},
  \qquad
  U_k=\max_i x_{i,k}.
\]
Let
\[
  c_k=\frac{L_k+U_k}{2},
  \qquad
  w_k=\frac{U_k-L_k}{2},
  \qquad
  c=(c_1,c_2),
  \qquad
  h_{\rm box}=\sqrt{w_1^2+w_2^2}.
\]
The Lifted box mechanism outputs:
\[
  F=(c,h_{\rm box})\in\mathbb R^3.
\]

\begin{figure}
\begin{tikzpicture}[>=Latex,every node/.style={font=\small}]
  \coordinate (A) at (3.00,4.05);
  \coordinate (B) at (9.10,2.35);
  \coordinate (D) at (-.70,2.35);
  \coordinate (E) at (5.40,.65);
  \coordinate (C) at (4.20,2.35);
  \coordinate (Mone) at (7.25,1.50);
  \coordinate (Mtwo) at (2.35,1.50);
  \coordinate (F) at (4.20,5.25);

  \fill[blue!8] (A)--(B)--(E)--(D)--cycle;
  \draw[blue!65!black,very thick] (A)--(B)--(E)--(D)--cycle;

  \foreach \P in {
    (3.00,4.05),
    (9.10,2.35),
    (-.70,2.35),
    (5.40,.65),
    (3.25,2.10)
  }
    \fill[black] \P circle (1.7pt);

  \coordinate (Ri) at (5.55,2.55);
  \fill[black] (Ri) circle (1.7pt)
    node[above right=1pt,fill=blue!8,inner sep=1pt]
    {report $x_i$};

  \node[blue!65!black,anchor=south east]
    at ($(A)+(-.10,.12)$) {$(L_1,L_2)$};

  \node[blue!65!black,anchor=west]
    at ($(B)+(.14,0)$) {$(U_1,L_2)$};

  \node[blue!65!black,anchor=east]
    at ($(D)+(-.14,0)$) {$(L_1,U_2)$};

  \node[blue!65!black,anchor=north west]
    at ($(E)+(.10,-.12)$) {$(U_1,U_2)$};

  \draw[teal!70!black,very thick,-{Latex[length=2.2mm]}]
    (C)--(Mone)
    node[midway,below=3pt,fill=blue!8,inner sep=1pt] {$w_1$};

  \draw[orange!85!black,very thick,-{Latex[length=2.2mm]}]
    (C)--(Mtwo)
    node[pos=.58,below left=2pt,fill=blue!8,inner sep=1pt] {$w_2$};

  \fill[blue!80!black] (C) circle (2.2pt)
    node[above left=4pt,fill=blue!8,inner sep=1pt]
    {box center $c$};

  \draw[purple!75!black,very thick,dashed] (C)--(F);

  \draw[
    purple!75!black,
    thick,
    {Latex[length=1.8mm]}-{Latex[length=1.8mm]}
  ]
    ($(C)+(.30,.08)$)--($(F)+(.30,-.08)$)
    node[pos= 0.6,right=7pt,fill=white,inner sep=2pt]
    {$h_{\rm box}=\sqrt{w_1^2+w_2^2}$};

  \fill[purple!75!black] (F) circle (3pt)
    node[above right=3pt]
    {facility $F=(c,h_{\rm box})$};

  \node[gray!70!black,anchor=north]
    at ($(E)+(0,-.55)$)
    {report plane $\mathbb{R}^2\times\{0\}$};
\end{tikzpicture}
\caption{Illustration of the \emph{Lifted Box-center} mechanism.}
\label{fig:liftedbox}

\end{figure}

\begin{theorem}
\label{thm:lifted-box}
The \emph{Lifted Box-center} mechanism is deterministic truthful and has
approximation ratio $\sqrt3$ .
\end{theorem}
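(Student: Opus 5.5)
The plan mirrors the proof of \Cref{thm:lifted-interval}: compute the squared cost of an agent and show it separates coordinate-wise into one-dimensional terms of the same form as before. Fix an agent with true location $x=(x_1,x_2)$ and the reports of the others, and let $L'_k,U'_k$ be the extremes after the agent reports $x'$. The squared distance from $(x,0)$ to $F=(c',h'_{\rm box})$ is
\[
  \sum_{k=1}^{2}\Bigl[(x_k-c'_k)^2+(w'_k)^2\Bigr]
  =\sum_{k=1}^{2}\frac{(x_k-L'_k)^2+(x_k-U'_k)^2}{2},
\]
by the same identity $(x-m)^2+a^2=\tfrac12[(x-x_-)^2+(x-x_+)^2]$ used for the lifted interval mechanism. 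Each summand depends only on the $k$-th coordinate of the report, and the $k$-th extremes are $\min\{L_k,x'_k\}$ and $\max\{U_k,x'_k\}$, where $L_k,U_k$ are the extremes of the other agents. This is exactly the one-dimensional situation of \Cref{thm:lifted-interval}. That proof showed that, for each coordinate, reporting $x'_k=x_k$ minimizes $\tfrac12[(x_k-L'_k)^2+(x_k-U'_k)^2]$. Minimizing each summand separately therefore minimizes the sum, so truthful reporting is optimal and the mechanism is deterministic truthful. The case of the other agents being absent does not arise since $n\ge2$; a coincident profile is handled as usual.

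For the approximation ratio, I would normalize so that the smallest enclosing disk is the unit disk centered at the origin, giving $\OPT=1$. For any agent $x_i$, the box contains $x_i$, so $|x_{i,k}-c_k|\le w_k$. Hence
\[
  d\bigl((x_i,0),F\bigr)^2\le w_1^2+w_2^2+w_1^2+w_2^2=2(w_1^2+w_2^2).
\]
This bound alone is too weak, since $w_k\le1$ only gives $2$. The argument needs a sharper bound that ties $x_i$ to the box. Write
\[
  |x_{i,k}-c_k|^2+w_k^2=\tfrac12\bigl[(x_{i,k}-L_k)^2+(x_{i,k}-U_k)^2\bigr].
\]
With $L_k,U_k\in[-1,1]$, this is at most $\tfrac12[(x_{i,k}+1)^2+(x_{i,k}-1)^2]=x_{i,k}^2+1$. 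Summing over $k$ gives a squared cost of at most $|x_i|^2+2\le3$. So every agent is at distance at most $\sqrt3$ from $F$, and the ratio is at most $\sqrt3$.

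Tightness should come from an example in which equality holds in every step. Take agents at $(1,0),(-1,0),(0,1),(0,-1)$. The box is $[-1,1]^2$, so $c=0$ and $h_{\rm box}=\sqrt2$. The agent at $(1,0)$ is then at distance $\sqrt{1+2}=\sqrt3$ from $F$, while $\OPT=1$. This shows the ratio is exactly $\sqrt3$.

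The main obstacle I expect is the approximation bound. The naive estimate via $w_k\le1$ loses a factor, and the key step is to re-express each coordinate term through the distances from $x_{i,k}$ to the two box sides, then use that the box lies inside $[-1,1]^2$ together with $|x_i|\le1$. The truthfulness part is essentially a coordinate-wise reduction to \Cref{thm:lifted-interval}.
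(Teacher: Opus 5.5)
Your proof is correct. The truthfulness half is the paper's argument: the same coordinate identity $(x_{i,k}-c_k)^2+w_k^2=\tfrac12[(x_{i,k}-L_k)^2+(x_{i,k}-U_k)^2]$. You phrase it as a separable minimization against the other agents' extremes, whereas the paper splits a planar deviation into single-coordinate deviations. Your ratio bound, however, takes a different and shorter route.

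The paper works with the box center. It flips axes so that $c_1,c_2\ge0$ and uses $w_k\le 1-c_k$ to get $w_1^2+w_2^2\le 2-2(c_1+c_2)+\|c\|^2$. It then expands $\|x_i-c\|^2\le 1+\|c\|^2-2x_i\cdot c$ and lower bounds $x_i\cdot c\ge 2\|c\|^2-(c_1+c_2)$, reaching $3-2\|c\|^2$.

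You instead reuse the truthfulness identity and bound the two one-sided gaps by $x_{i,k}-L_k\le 1+x_{i,k}$ and $U_k-x_{i,k}\le 1-x_{i,k}$. This yields the squared-cost bound $2+\|x_i\|^2\le 3$ with no symmetry reduction and no auxiliary inequality on $x_i\cdot c$. It also makes the higher-dimensional analogue $d+\|x_i\|^2\le d+1$ immediate.

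In exchange, the paper's computation gives a bound in terms of the box center, so its worst case requires the box center to coincide with the center of the enclosing disk. Your bound instead shows the worst case requires the agent to lie on the enclosing circle and the box to be all of $[-1,1]^2$. Both descriptions agree with the cross example you use for tightness.
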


\begin{proof}
We first prove truthfulness. Fix an agent $i$, with true location
$x_i=(x_{i,1},x_{i,2})$, and fix the reports of all other agents. Let
$(c_1,c_2)$ be the center of the bounding box and let $w_1,w_2$ be its
half-widths. Since the facility is lifted by a height
$\sqrt{w_1^2+w_2^2}$, the squared distance from the agent to the facility is
\[
    \|(x_i,0)-\M(\vect{x})\|^2
    =
    (x_{i,1}-c_1)^2+w_1^2
    +(x_{i,2}-c_2)^2+w_2^2.
\]
The two coordinates therefore contribute independently to the squared
distance. We can thus consider a deviation in one coordinate at a time, starting with the first. Recall $
    c_1=\frac{L_1+U_1}{2},
    \quad
    w_1=\frac{U_1-L_1}{2}.
   $
   
The contribution of this coordinate to the squared distance is
\begin{equation}
\label{eq:box-coordinate-energy}
    (x_{i,1}-c_1)^2+w_1^2
    =
\frac{(x_{i,1}-c_1+w_1)^2+(x_{i,1}-c_1-w_1)^2}{2}
=
    \frac{(x_{i,1}-L_1)^2+(x_{i,1}-U_1)^2}{2}.
\end{equation}

This identity shows that truthful reporting minimizes the agent's cost,
since it minimizes both $|x_{i,1}-L_1|$ and $|x_{i,1}-U_1|$.
Indeed, if $L_1<x_{i,1}<U_1$, any lie that changes the box can only
decrease $L_1$ or increase $U_1$, thereby increasing the contribution of
this coordinate. Similarly, if the agent is on the boundary, say
$x_{i,1}=L_1$, then $|x_{i,1}-L_1|=0$ cannot be decreased, while moving
the report away from $x_{i,1}$ can only increase one of the two terms.

The same argument applies to a deviation in the second coordinate. Finally,
any deviation in the plane can be decomposed into two deviations,
first changing one coordinate and then the other. Neither step can decrease
the agent's cost, so no deviation can be beneficial. Therefore the
Lifted Box-center mechanism is truthful.\\

\begin{figure}[t]
\centering
\resizebox{\linewidth}{!}{%
\begin{tikzpicture}[>=Latex,every node/.style={font=\small}]
  \coordinate (O) at (0,0);
  \coordinate (c) at (.33,.33);
  \coordinate (Xi) at (2.25,0);
  \coordinate (LBox) at (-1.59,-1.59);
  \coordinate (UBox) at (2.25,2.25);

  \fill[gray!5] (O) circle (2.25);
  \draw[gray!35,thin] (-2.38,0)--(2.38,0);
  \draw[gray!35,thin] (0,-2.38)--(0,2.38);

  \fill[blue!20,fill opacity=.12] (LBox) rectangle (UBox);

  \draw[gray!70,very thick] (O) circle (2.25);
  \draw[blue!65!black,very thick] (LBox) rectangle (UBox);

  \foreach \Q in {
    (-1.59,-1.59),
    (1.59,1.59),
    (2.25,0),
    (0,2.25)
  }
    \fill[black] \Q circle (1.7pt);

  \draw[gray!55,densely dashed]
    (-1.59,-1.59)--(1.59,1.59);

  \node[gray!70!black,fill=white,inner sep=2pt]
    at (-1,2.55)
    {normalized smallest enclosing disk:
     $\operatorname{OPT}=R=1$};

  \node[blue!65!black,fill=white,inner sep=1pt]
    at (-1.38,-1.82)
    {bounding box};

  \fill[black] (O) circle (1.6pt)
    node[below left=2pt] {$0$};

  \draw[red!70!black,very thick,-{Latex[length=2mm]}]
    (O)--(c);

  \fill[blue!75!black] (c) circle (2.1pt)
    node[above left=4pt,fill=white,inner sep=1pt] {$c$};

  \fill[black] (Xi) circle (1.9pt)
    node[right=5pt,fill=white,inner sep=1pt] {$x_i$};

  \draw[purple!75!black,very thick,-{Latex[length=2mm]}]
    (c)--(Xi)
    node[midway,above=5pt,fill=white,inner sep=1pt]
    {$\delta_i=x_i-c$};

  \draw[
    teal!70!black,
    very thick,
    {Latex[length=1.7mm]}-{Latex[length=1.7mm]}
  ]
    (.33,.82)--(2.25,.82)
    node[midway,above=2pt,fill=white,inner sep=1pt]
    {$w_1$};

  \draw[densely dashed,teal!55] (.33,.33)--(.33,.82);
  \draw[densely dashed,teal!55] (2.25,0)--(2.25,.82);

  \draw[
    orange!85!black,
    very thick,
    {Latex[length=1.7mm]}-{Latex[length=1.7mm]}
  ]
    (-.18,.33)--(-.18,2.25)
    node[midway,left=3pt,fill=white,inner sep=1pt]
    {$w_2$};

  \draw[densely dashed,orange!60] (-.18,.33)--(.33,.33);
  \draw[densely dashed,orange!60] (-.18,2.25)--(0,2.25);

  \node[
    anchor=west,
    draw=black!30,
    rounded corners=2pt,
    fill=gray!3,
    inner sep=7pt,
    align=left
  ] (ineq) at (3.05,.90) {$\displaystyle
    \begin{aligned}
      w_1&\le1-c_1,\\
      w_2&\le1-c_2,\\[2pt]
      h_{\rm box}^2
        &=w_1^2+w_2^2\\
        &\le2-2(c_1+c_2)+\lVert c\rVert^2,\\[2pt]
      \lVert\delta_i\rVert^2
        &\le1+\lVert c\rVert^2-2x_i\mathbin{\cdot}c,\\
      x_i\mathbin{\cdot}c
        &\ge2\lVert c\rVert^2-(c_1+c_2).
    \end{aligned}$};

  \node[
    anchor=north west,
    draw=purple!45!black,
    very thick,
    rounded corners=2pt,
    fill=purple!4,
    inner sep=7pt,
    align=center
  ] at ($(ineq.south west)+(0,-.30)$) {$\displaystyle
    \begin{aligned}
      \lVert(x_i,0)-M_{\rm box}(x)\rVert^2
        &=\lVert\delta_i\rVert^2+h_{\rm box}^2\\
        &\le3-2\lVert c\rVert^2\\
        &\le3.
    \end{aligned}$};

\end{tikzpicture}%
}
\caption{The ratio argument for the Lifted Box-center mechanism. The smallest
enclosing disk has been normalized to radius one.}
\label{fig:lifted-box-ratio}
\end{figure}

Now focusing on the approximation ratio, let $R$ be the radius of a smallest enclosing disk of the
reports.  If $R=0$, all reports coincide and the mechanism returns their
common location at height zero.  Assume $R>0$, and translate and scale the
smallest enclosing disk to the unit disk centered at the origin.  By flipping
axis directions, we can assume $c_1,c_2\ge0$.  Since the bounding box lies in
$[-1,1]^2$, we get
\[
  w_1\le1-c_1,
  \qquad
  w_2\le1-c_2.
\]
Figure~\ref{fig:lifted-box-ratio} summarizes the whole argument: fix a reported point $x_i$, and write $\delta_i=x_i-c$.  First,
\begin{equation}
\label{eq:box-height-bound}
  w_1^2+w_2^2
  \le 2-2(c_1+c_2)+\|c\|^2.
\end{equation}
Second, $\|x_i\|\le1$ gives
\begin{equation}
\label{eq:box-planar-bound}
  \|\delta_i\|^2=\|x_i-c\|^2
  \le1+\|c\|^2-2x_i\mathbin{\cdot}c.
\end{equation}
Every point of the box satisfies
\[
  x_{i,1}\ge c_1-w_1\ge2c_1-1,
  \qquad
  x_{i,2}\ge c_2-w_2\ge2c_2-1.
\]
Because $c_1,c_2\ge0$, it follows that
\[
  x_i\mathbin{\cdot}c = x_{i,1} c_1 +x_{i,2} c_2 \geq 2c_1^2 - c_1 + 2 c_2^2 -c_2
  = 2\|c\|^2-(c_1+c_2).
\]
Combining this inequality with
\Cref{eq:box-height-bound,eq:box-planar-bound} yields
\begin{align*}
  \|(x_i,0)-\M(\vect{x})\|^2
  &=\|\delta_i\|^2+w_1^2+w_2^2 \\
  &\leq 1+\|c\|^2-2x_i\mathbin{\cdot}c  + 2-2(c_1+c_2)+\|c\|^2 \\
  &\le3+2\|c\|^2-4\|c\|^2
  \le3.
\end{align*}
Since the smallest enclosing disk has been normalized to radius one, the ratio is $\sqrt3$. 

To see that this approximation factor is tight, consider the four-point cross
$\{
  (1,0),\allowbreak (-1,0),\allowbreak(0,1),\allowbreak(0,-1)
\}$.
Its optimum radius is $1$, while $c=0$, $w_1=w_2=1$, and the lifted
height is $h_{\rm box}=\sqrt2$.  Every boundary agent has cost $\sqrt3$. 
\end{proof} 

\subsection{The \emph{SD-lift} mechanism exploiting an additional dimension}

\label{sec:RdRdp}

Assume the agents live in a euclidean space of $d$ dimensions $\mathbb R^d$ for any $d$, and the mechanism has access to an additional orthogonal dimension. Consider the deterministic \emph{SD-lift} mechanism, in which the mechanism places the facility at the centroid of the reported locations, lifted by a distance equal to the standard deviation of the reported location in the additional dimension. Specifically, if the agents reported location are at $\vect x = (x_1,\dots, x_n)$, and the centroid is $\bar x=\frac1n\sum_{i=1}^nx_i$, then the lift distance is equal to $\sigma = \sqrt{\frac1n\sum_{i=1}^n||x_i-\bar x||^2}$.

\begin{theorem}
The \emph{SD-lift} mechanism is truthful and has an approximation factor of $2\sqrt{\frac{n-1}{n}}$.
\end{theorem}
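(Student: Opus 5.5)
The key step is an identity: the squared cost of any point $y\in\mathbb R^d$ with respect to the SD-lift facility equals the mean squared distance from $y$ to the reported points. Truthfulness and the ratio both follow from it. Let $\vect{x}'$ be a report profile with centroid $\bar x'$ and standard deviation $\sigma'$. The facility is $(\bar x',\sigma')\in\mathbb R^{d+1}$, and a point $y$ sits at $(y,0)$. So its squared distance to the facility is $\|y-\bar x'\|^2+\sigma'^2$. By the parallel-axis identity for the centroid,
\[
\frac1n\sum_{j=1}^n\|x'_j-y\|^2=\|y-\bar x'\|^2+\frac1n\sum_{j=1}^n\|x'_j-\bar x'\|^2 .
\]
The second term on the right is exactly $\sigma'^2$. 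Hence
\[
d\bigl((y,0),\M(\vect x')\bigr)^2=\frac1n\sum_{j=1}^n\|x'_j-y\|^2 .
\]
This mirrors the identities \Cref{eq:box-coordinate-energy} and the one in \Cref{thm:lifted-interval}, where the lift turns the cost into an average of squared distances to reports.

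\textbf{Truthfulness.} Fix agent $i$ with true location $x_i$, and fix the other reports. Take $y=x_i$ in the identity. The agent's squared cost is then $\frac1n\sum_{j\ne i}\|x_j-x_i\|^2+\frac1n\|x'_i-x_i\|^2$. The first sum does not depend on the report $x'_i$. The last term is nonnegative and vanishes exactly when $x'_i=x_i$. Since the mechanism is deterministic and the square root is increasing, truthful reporting minimizes the agent's cost.

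\textbf{Approximation ratio.} Let $R=\OPT(\vect x)$. If $R=0$, all reports coincide, and the mechanism outputs the common point with $\sigma=0$, so it is optimal. Otherwise, any two agents are within distance $2R$ of each other, since both lie in the smallest enclosing ball. The identity at $y=x_i$, with truthful reports, gives
\[
d\bigl((x_i,0),\M(\vect x)\bigr)^2=\frac1n\sum_{j\ne i}\|x_j-x_i\|^2\le\frac{(n-1)4R^2}{n}.
\]
The term $j=i$ vanishes, and this is the source of the $\frac{n-1}{n}$ factor. Taking square roots bounds every agent's cost by $2\sqrt{\frac{n-1}{n}}\,R$.

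\textbf{Tightness.} Place one agent at $e_1$ and $n-1$ agents at $-e_1$. Then $\OPT=1$, and by the identity the isolated agent's cost is $\sqrt{\frac{(n-1)\cdot 4}{n}}$. So the ratio $2\sqrt{\frac{n-1}{n}}$ is exact.

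The only real obstacle is recognizing the identity above. Once the cost is rewritten as an average of squared distances to the reports, both parts are one-line arguments.
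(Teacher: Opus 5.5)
Your proposal is correct and follows the paper's proof essentially step for step. The paper also rewrites the squared facility distance as $\frac1n\sum_{i=1}^n\|p-x_i\|^2$, obtaining this parallel-axis identity by expanding with $\sum_i(\bar x-x_i)=0$, and then reads off truthfulness, the $\frac{n-1}{n}(2\OPT)^2$ bound, and tightness from the same one-versus-$(n-1)$ instance. Your truthfulness step is slightly more explicit than the paper's, since you isolate $\frac1n\|x_i'-x_i\|^2$ as the only report-dependent term.
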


\begin{proof}

We first prove that the mechanism is truthful.
For any point $p\in \mathbb R^d$, we can express the squared distance from $p$ to the facility location $\M(\vect x)$ purely as a function of the distances from $p$ to the agent locations, using the equality $\sum_{i=1}^n (\bar x-x_i)=0$:
\begin{align*}
||p-\M(\vect x)||^2 &= ||p-\bar x||^2 + \sigma^2 \\
&=\frac1n \sum_{i=1}^n \left( ||p-\bar x||^2 + ||\bar x-x_i||^2 \right) +  (p-\bar x) \cdot  \vec{0}\\
&=\frac1n \sum_{i=1}^n \left( ||p-\bar x||^2 + ||\bar x-x_i||^2 \right) +  \frac1n\sum_{i=1}^n 2(p-\bar x) \cdot  (\bar x-x_i)\\
&= \frac1n \sum_{i=1}^n ||p-\bar x +\bar x-x_i||^2 = \frac1n \sum_{i=1}^n ||p-x_i||^2.
\end{align*}

Therefore, if the true location of agent $i$ is at $p$, the best reported location $x_i$ for agent $i$ is equal to $p$, hence the truthfulness.

We now prove the approximation factor of \emph{SD-lift}. Let $j$ be the agent furthest from the facility location. We have:
\begin{align*}
||x_j-\M(\vect x)||^2 &= \frac1n \sum_{i=1}^n ||x_j-x_i||^2 \leq \frac1n \cdot (n-1) \cdot (2\OPT)^2.
\end{align*}

Indeed, any two distinct agents are separated by a distance at most $2\OPT$ by definition. Therefore, the approximation factor is at most $2\sqrt{\frac{n-1}{n}}$.

This bound is tight for the mechanism. Consider two points at
distance $2$, with one agent at one point and the remaining $n-1$ agents
at the other. Then $\OPT=1$, while for the isolated agent $x_j$,
\[
\|x_j-\M(\vect{x})\|^2
=
\frac{1}{n}(n-1)\cdot 2^2
=
4\cdot\frac{n-1}{n}.
\]
Hence, the approximation factor is exactly
$2\sqrt{(n-1)/n}$.

\end{proof}

\paragraph{Acknowledgements}

{This work is supported by the ANR project Predictions ANR-23-CE48-0010 and the MIAI Chaire Frugal Artificial Intelligence ANR-23-IACL-0006.}

\bibliographystyle{plainnat}
\bibliography{bibfile}

\appendix

\end{document}